\renewcommand{\phi}{\varphi}
\newcommand{\ignore}[1]{}
\begin{document}

\title{Switchability and collapsibility of Gap Algebras}
\author{Dmitriy Zhuk\inst{1} and Barnaby Martin\inst{2}\thanks{The second author was supported by EPSRC grant EP/L005654/1.}}
\institute{
  Moscow State University, 119899 Moscow, Russia. \\
  \url{zhuk@intsys.msu.ru}\\
  \and
  School of Science and Technology, Middlesex University, \\
  The Burroughs, Hendon, London, NW4 4BT, UK.\\
  \url{barnabymartin@gmail.com}\\
}
\maketitle

\begin{abstract}
Let $\mathbb{A}$ be an idempotent algebra on a $3$-element domain $D$ that omits a $G$-set for a factor. Suppose  $\mathbb{A}$ is not $\alpha\beta$-projective (for some $\alpha,\beta \subset D$ so that $\alpha \cup \beta = D$ and $\alpha \cap \beta \neq \emptyset$) and is not collapsible. It follows that $\mathbb{A}$ is switchable. We prove that, for every finite subset $\Delta$ of Inv$(\mathbb{A})$, Pol$(\Delta)$ is collapsible.
We also exhibit an algebra that is collapsible from a non-singleton source but is not collapsible from any singleton source.
\end{abstract}

\section{Introduction}

For a finite-domain algebra $\mathbb{A}$ we associate a function
$f_\mathbb{A}:\mathbb{N}\rightarrow\mathbb{N}$, giving the cardinality
of the minimal generating sets of the sequence $\mathbb{A},
\mathbb{A}^2, \mathbb{A}^3, \ldots$ as $f(1), f(2), f(3), \ldots$,
respectively. We may say $\mathbb{A}$ has the $g$-GP if $f(m) \leq
g(m)$ for all $m$. The question then arises as to the growth rate of
$f$ and specifically regarding the behaviours constant, logarithmic,
linear, polynomial and exponential. Wiegold proved in
\cite{WiegoldSemigroups} that if $\mathbb{A}$ is a finite semigroup
then $f_{\mathbb{A}}$ is either linear or exponential, with the former
prevailing precisely when $\mathbb{A}$ is a monoid. This dichotomy
classification may be seen as a gap theorem because no growth rates
intermediate between linear and exponential may occur. We say
$\mathbb{A}$  enjoys the \emph{polynomially generated powers} property
(PGP) if there exists a polynomial $p$ so that $f_{\mathbb{A}}=O(p)$
and  the \emph{exponentially generated powers} property (EGP) if there
exists a constant $b$ so that $f_{\mathbb{A}}=\Omega(g)$ where
$g(i)=b^i$.

A great literature of work exists from the past twenty years on
applications of universal algebra in the computational complexity of
\emph{constraint satisfaction problems} (CSPs) and a number of
celebrated results have been obtained through this method. 
Each
CSP is parameterised by a finite structure $\mathcal{B}$ and asks whether an input
sentence $\varphi$ holds on $\mathcal{B}$, where $\varphi$ is a
primitive positive sentence, that is where only $\exists$ and $\land$ may be
used. For almost every class of model checking problem induced by the presence
or absence of first-order quantifiers and connectors, we can give a complexity
classification~\cite{DBLP:journals/corr/abs-1210-6893}: the two
outstanding classes are CSPs and its popular extension \textsl{quantified
  CSPs} (QCSPs) for positive Horn sentences -- where $\forall$ is also present -- which is used in
Artificial Intelligence to model non-monotone reasoning or
uncertainty. 

In Hubie Chen's \cite{AU-Chen-PGP}, a new link between algebra and
QCSP was discovered. Chen's previous work in QCSP tractability largely
involved the special notion of \emph{collapsibility}
\cite{hubie-sicomp}, but in \cite{AU-Chen-PGP} this was extended to a \emph{computationally effective} version of the PGP.

For a finite-domain, idempotent algebra $\mathbb{A}$, \emph{$k$-collapsibility} may be seen as a special form of the PGP in which the generating set for $\mathbb{A}^m$ is constituted of all tuples $(x_1,\ldots,x_m)$ in which at least $m-k$ of these elements are equal. \emph{$k$-switchability} may be seen as another special form of the PGP in which the generating set for $\mathbb{A}^m$ is constituted of all tuples $(x_1,\ldots,x_m)$ in which there exists $a_i<\ldots<a_{k'}$, for $k'\leq k$, so that
\[ (x_1,\ldots,x_m) = (x_1,\ldots,x_{a_1},x_{a_1+1},\ldots,x_{a_2},x_{a_2+1},\ldots,\ldots,x_{a_k'},x_{a_k'+1},\ldots,x_m),\]
where $x_1=\ldots=x_{a_1-1}$, $x_{a_1}=\ldots=x_{a_2-1}$, \ldots, $x_{a_{k'}}=\ldots=x_{a_m}$. Thus, $a_1,a_2,\ldots,a_{k'}$ are the indices where the tuple switches value. Note that these are not the original definitions but they are proved equivalent to the original definitions in \cite{LICS2015}. We say that $\mathbb{A}$ is collapsible (switchable) if there exists $k$ such that it is $k$-collapsible ($k$-switchable). For any finite algebra, $k$-collapsibility implies $k$-switchability and for any $2$-element algebra, $k$-switchability implies $k$-collapsibility. Chen originally introduced switchability because he found a $3$-element algebra that enjoyed the PGP but was not collapsible \cite{AU-Chen-PGP}. He went on to prove that switchability of $\mathbb{A}$ implies that the corresponding QCSP is in P, what one might informally state as QCSP$(\mathrm{Inv}(\mathbb{A}))$ in P, where $\mathrm{Inv}(\mathbb{A})$ can be seen as the structure over the same domain as $\mathbb{A}$ whose relations are precisely those that are preserved by (invariant under) all the operations of $\mathbb{A}$. However, the QCSP is typically defined only on finite sets of relations (else the question arises as to encoding), thus a more formal definition might be that, for any finite subset $\Delta$ of $\mathrm{Inv}(\mathbb{A})$, QCSP$(\Delta)$ is in P. What we prove in this paper is that, as far as the QCSP is concerned, switchability on a $3$-element algebra is something of a mirage. What we mean by this is that we pursue Chen's line of thinking from \cite{AU-Chen-PGP} and consider $3$-element algebras $\mathbb{A}$ that omit a $G$-set as a factor\footnote{We will not define what this means and will not comment much further on it since it was an assumption chosen by Chen in \cite{AU-Chen-PGP}. Note that, if idempotent $\mathbb{A}$ has a $G$-set as a factor, then  QCSP$(\mathrm{Inv}(\mathbb{A}))$ is NP-hard.} and do not have the EGP: he proves such algebras are switchable. What we prove is that for these same algebras, for any finite subset $\Delta$ of Inv$(\mathbb{A})$, Pol$(\Delta)$ is actually collapsible. Thus, for QCSP complexity here, we do not need the additional notion of switchability to explain tractability, as collapsibility will already suffice. Since these notions were introduced in connection with the QCSP this is particularly surprising. Note that the parameter $k$ of collapsibility is unbounded over these increasing finite subsets $\Delta$ while the parameter of switchability clearly remains bounded. In some way we are suggesting that switchability itself might be seen as a limit phenomenon of collapsibility. This is hard to see on the algebraic side but both collapsibiliy and switchability admit equivalent definitions on the relational side where this is more apparent. 

Zhuk settled the PGP versus EGP dichotomy for finite-domain algebras in \cite{ZhukGap2015} (indeed his dichotomy holds even in the non-idempotent case). All of the PGP cases are switchable.

Finally, at the end of the paper we exhibit an algebra over $3$ elements that is collapsible from a non-singleton source but is not collapsible from any singleton source. This answers a question left open from the work \cite{LICS2015} (though not specifically mentioned there).

\section{Preliminaries}

In this paper we will use the original definition of collapsibility, as given in  \cite{hubie-sicomp}, rather than the equivalent definition in the introduction. An $m$-ary \emph{adversary} over a finite set $D$ is an $m$-tuple of subsets of $D$. Suppose $f$ is an idempotent $k$-ary function on $D$. We say the adversary $(A_1,\ldots,A_m)$ is \emph{$f$-composable} from adversaries $(A^1_1,\ldots,A^1_m)$, \ldots, $(A^1_1,\ldots,A^1_m)$ if we have $f(A^1_1,\ldots,A^k_1) \supseteq A_1$, \ldots,  $f(A^1_m,\ldots,A^k_m) \supseteq A_m$. Let $\mathbb{A}$ be an idempotent algebra on finite $D$. Define a special set of adversaries $\Sigma^D_{m,k,x}$, for each $x \in D$, which contains all co-ordinate permutations of the adversaries $(D,\ldots,D,\{x\},\ldots,\{x\})$ where $D$ appears $k$ times and $\{x\}$ appears $m-k$ times. $\mathbb{A}$ is \emph{$k$-collapsible from source $X\subseteq D$} \cite{hubie-sicomp} if, for all $m$, there exists a term $f$ of $\mathbb{A}$ so that the adversary $(D,\ldots,D)$ is $f$-composable from the adversaries $\bigcup_{x \in X} \Sigma^D_{m,k,x}$. When the source $X$ is the whole set $D$, we just talk about being $k$-collapsible. We further say that $\mathbb{A}$ is \emph{collapsible} if it is $k$-collapsible for some $k$. Again we remind that this is one of a number of equivalent definitions of collapsibility proved equivalent in \cite{LICS2015}. We will not define switchability beyond what we mentioned in the introduction since, although it plays a role in this paper, this role is essentially non-technical. The only important point is that there is an algebra on a $3$-element domain that is switchable but not collapsible, that we will meet shortly. We note also that the original definition of switchability involves the so-called reactive composition of adversaries and is somewhat more complicated that that suggested in the introduction.

Let $f$ be a $k$-ary idempotent operation on domain $D$. We say $f$ is a \emph{generalised Hubie-pol} on $z_1\ldots z_k$ if, for each $i \in k$, $f(D,\ldots,D,z_i,D,\ldots,D)=D$ ($z_i$ in the $i$th position). When $z_1=\ldots=z_k=a$ this is called a \emph{Hubie-pol} in $\{a\}$ and gives $(k-1)$-collapsibility from source $\{a\}$. In general, a generalised Hubie-pol does not bestow collapsibility (\mbox{e.g.} Chen's $4$-ary switchable operation $r$, below).  The name Hubie operation was used in \cite{LICS2015} for Hubie-pol and the fact that this leads to collapsibility is noted in \cite{hubie-sicomp}.

\

\noindent \textbf{Globally}: let $\mathbb{A}$ be an idempotent algebra on a $3$-element domain $\{a,b,c\}:=D$. Assume $\mathbb{A}$ has precisely two subalgebras on domains $\{a,c\}$ and $\{b,c\}$ and contains the idempotent semilattice-without-unit operation $s$ which maps all tuples off the diagonal to $c$. Thus, $\mathbb{A}$ is a \emph{Gap Algebra} as defined in \cite{AU-Chen-PGP}. Note that the presence of $s$ removes the possibility to have a $G$-set as a factor. We say that $\mathbb{A}$ is $\{a,c\}\{b,c\}$-projective if for each $k$-ary $f$ in $\mathbb{A}$ there exists $i \leq k$ so that, if $x_i \in \{a,c\}$ then $f(x_1,\ldots,x_k) \in \{a,c\}$ and if $x_i \in \{b,c\}$ then $f(x_1,\ldots,x_k) \in \{b,c\}$. Let us now further assume that $\mathbb{A}$ is not $\{a,c\}\{b,c\}$-projective. This rules out the Gap Algebras that have EGP and we now know that $\mathbb{A}$ is switchable \cite{AU-Chen-PGP}. We will now consider the $4$-ary operation $r$ defined by Chen in \cite{AU-Chen-PGP}. Let $r$ be the idempotent operation satisfying
\[
\begin{array}{ccc}
abbb & & b \\
babb & r & b \\
aaab &\mapsto & a \\
aaba & & a \\
\mbox{else} & & c.
\end{array} 
\]
Chen proved that $(D;r,s)$ is $2$-switchable but not $k$-collapsible, for any $k$ \cite{AU-Chen-PGP}. Let $f$ be a $k$-ary operation in $\mathbb{A}$ that is not $\{a,c\}\{b,c\}$-projective. Violation of $\{a,c\}\{b,c\}$-projectivity in $f$ means that for each $i \in [k]$ either 
\begin{itemize}
\item there is $x_i \in \{a,b\}$ and $x_1,\ldots,x_{i-1},x_{i+1},\ldots,x_k \in \{a,b,c\}$ so that $f(x_1,\ldots,x_k)=y \in (\{a,b\}\setminus \{x_i\})$, or
\item or $x_i=c$ and there is $x_1,\ldots,x_{i-1},x_{i+1},\ldots,x_k \in \{a,b,c\}$ so that $f(x_1,\ldots,x_k)=y \in \{a,b\}$.
\end{itemize}
Note that we can rule out the latter possibility and further assume $x_1,\ldots,x_{i-1},$ $x_{i+1},\ldots,x_k \in \{a,b\}$, by replacing $f$ if necessary by the $2k$-ary $f(s(x_1,x'_1),\ldots,$ $s(x_k,x'_k))$. Thus, we may assume that (*) for each $i \in [k]$ there is $x_i \in \{a,b\}$ and $x_1,\ldots,x_{i-1},x_{i+1},\ldots,x_k \in \{a,b\}$ so that $f(x_1,\ldots,x_k)=y \in (\{a,b\}\setminus \{x_i\})$.

We wish to partition the $k$ co-ordinates of $f$ into those for which violation of $\{a,c\}\{b,c\}$-projectivity, on words in $\{a,b\}^k$:
\begin{itemize}
\item[$(i)$] happens with $a$ to $b$ but never $b$ to $a$.
\item[$(ii)$] happens with $b$ to $a$ but never $a$ to $b$.
\item[$(iii)$] happens on both $a$ to $b$ and $b$ to $a$.
\end{itemize}
Note that Classes $(i)$ and $(ii)$ are both non-empty (Class $(iii)$ can be empty). This is because if Class $(i)$ were empty then  $f(s(x_1,x'_1),\ldots,s(x_k,x'_k))$ would be a Hubie-pol in $\{b\}$ and if Class $(ii)$ were empty we would similarly have a Hubie-pol in $\{a\}$. We will write $k$-tuples with vertical bars to indicate the split between these classes. Suppose there exists a $\overline{z}$ so that $f(a,\ldots,a|b,\ldots,b|\overline{z}) \in \{a,b\}$. Then we can identify all the variables in one among Class $(i)$ or Class $(ii)$ to obtain a new function for which one of these classes is of size one. Note that if, e.g., Class $(i)$ is made singleton, this process may move variables previously in Class $(iii)$ into Class $(ii)$, but never to Class $(i)$. 

Thus we may assume that either Class $(i)$ or Class $(ii)$ is singleton or, for all $\overline{z}$ over $\{a,b\}$, $f(a,\ldots,a|b,\ldots,b|\overline{z}) =c$. Indeed, these singleton cases are dual and thus \mbox{w.l.o.g.} we need only prove one of them. Recall the global assumptions are in force for the remainder of the paper.

\section{Properties of Gap Algebras that are switchable}

\begin{lemma}
\label{lem:fun}
Any algebra over $D$ containing $f$ and $s$ is either collapsible or has binary term operations $p_1$ and $p_2$ so that 
\begin{itemize}
\item $p_1(a,b)=b$ and $p_1(b,a)=p_1(c,a)=c$, \textbf{and}
\item $p_2(a,b)=a$ and $p_2(b,a)=p_2(b,c)=c$.
\end{itemize}
\end{lemma}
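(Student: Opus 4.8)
The plan is to run the case analysis already set up above. With $f$ fixed and $s$-padded so that (*) holds, we are in one of three cases: Class $(i)$ is a singleton, Class $(ii)$ is a singleton, or $f(a,\ldots,a|b,\ldots,b|\overline{z})=c$ for all $\overline{z}$ over $\{a,b\}$ (the \emph{degenerate} case). Every standing hypothesis --- the two subalgebras $\{a,c\},\{b,c\}$, the operation $s$, and failure of $\{a,c\}\{b,c\}$-projectivity --- is symmetric in $a$ and $b$, and this swap interchanges the two singleton cases while interchanging the prescribed behaviours of $p_1$ and $p_2$; so it suffices to produce $p_1$ and $p_2$ (or prove collapsibility) when Class $(i)$ is a singleton, and, separately, in the self-dual degenerate case. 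In both, fix a ``pivot'' coordinate $t$ in Class $(i)$ (with $t=1$ in the singleton case).

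For $p_1$, fix a word $w\in\{a,b\}^k$ with $w_t=a$ and $f(w)=b$ (a witness of the $a\to b$ violation at $t$) and set
\[
p_1(x,y)\ =\ s\bigl(f(u_1(x,y),\ldots,u_k(x,y)),\ y\bigr),
\]
where $u_i(x,y)=y$ if $w_i=b$ and $u_i(x,y)=x$ if $w_i=a$. At $(a,b)$ the inner word is exactly $w$, so $p_1(a,b)=s(b,b)=b$. At $(b,a)$ the inner word is the $a\leftrightarrow b$-image $\overline{w}$: its $t$-th coordinate is $b$, so $f(\overline{w})\neq a$ since $t$ lies in Class $(i)$; moreover every Class-$(ii)$ coordinate of $w$ must be $b$ (an $a$ there would be an $a\to b$ violation of a Class-$(ii)$ coordinate), hence every Class-$(ii)$ coordinate of $\overline{w}$ is $a$, which forces $f(\overline{w})\neq b$; so $f(\overline{w})=c$ and $p_1(b,a)=s(c,a)=c$. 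The same reasoning applies in the degenerate case. The operation $p_2$ is built symmetrically, pivoting on a coordinate of Class $(ii)$ (nonempty by the remark before the lemma) that witnesses a $b\to a$ violation, with the outer argument of $s$ taken to be $x$; its problematic equality is then $p_2(b,c)=c$, inside the subalgebra on $\{b,c\}$. (For Chen's operation $r$, which falls into the degenerate case, this recipe yields $p_1(x,y)=s(r(x,y,y,y),y)$ and $p_2(x,y)=s(r(x,x,x,y),x)$; here already $r(x,y,y,y)$ and $r(x,x,x,y)$ satisfy the required identities.)

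The crux is the third equality $p_1(c,a)=c$. At $(c,a)$ the inner word of $f$ lies in $\{a,c\}^k$, so $p_1$ restricted to $\{a,c\}$ is a binary term of the two-element subalgebra on $\{a,c\}$, on which $s$ restricts to the meet-semilattice with bottom $c$; thus $p_1(c,a)=c$ asks that this restriction agree with $s|_{\{a,c\}}$ at $(c,a)$. By the classification of two-element idempotent clones, that subalgebra --- which contains this semilattice --- is the semilattice clone itself, the two-element lattice, or the clone of all idempotent operations. When it is the semilattice clone, $f|_{\{a,c\}}$ is a meet of some of its coordinates and, choosing $w$ (and, in the degenerate case, the pivot) so that this meet involves a coordinate on which $w$ takes the value $a$, one gets $p_1|_{\{a,c\}}=s|_{\{a,c\}}$, with the symmetric statement for $p_2$ on $\{b,c\}$; this is the ``$p_1,p_2$ exist'' outcome. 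In every remaining configuration --- the subalgebra on $\{a,c\}$ or on $\{b,c\}$ strictly larger, or the exceptional case where $f|_{\{a,c\}}$ ignores all usable pivot coordinates --- I would instead show $\mathbb{A}$ is collapsible, by feeding the extra operations available on the enlarged two-element subalgebra(s), together with $f$ and $s$, into a Hubie-pol on all of $D$. I expect this last step --- reading collapsibility (a Hubie-pol on $D$) off the behaviour of $f$ on the two two-element subalgebras while keeping the constructions of $p_1$ and $p_2$ valid for the same $f$ --- to be the principal obstacle; the rest is bookkeeping with witness words and with $s$.
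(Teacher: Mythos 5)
Your construction of $p_1$ from a witness word $w$ (collapse the coordinates of $f$ according to where $w$ is $a$ or $b$) is exactly the paper's, and your verifications of $p_1(a,b)=b$ and $p_1(b,a)=c$ (the latter using that the pivot is in Class $(i)$ and that every Class $(ii)$ coordinate of $w$ must be $b$) match the paper's reasoning. But at the point you yourself identify as the crux, $p_1(c,a)=c$, there is a genuine gap, and the detour through the classification of two-element idempotent clones is not the right tool. The question is simply whether $f(\widetilde{w})=c$ or $=a$, where $\widetilde{w}\in\{a,c\}^k$ is the shadow of the witness ($a\mapsto c$, $b\mapsto a$). The paper resolves this with a clean dichotomy that your proposal misses: if $f(\widetilde{x})=a$ for \emph{every} witness $\overline{x}$ of \emph{every} Class $(i)$ variable, then $f(s(x_1,x'_1),\ldots,s(x_k,x'_k))$ is already a Hubie-pol in $\{b\}$ --- for a Class $(ii)$/$(iii)$ position one realises output $a$ with that position held in $\{b,c\}$ via a $b\to a$ witness, and for a Class $(i)$ position via the shadow $\widetilde{x}$ itself, whose entry there is $c$ --- and collapsibility follows immediately. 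Otherwise some witness has $f(\widetilde{x})=c$, and \emph{that} witness is the one you collapse to get $p_1$. Your version fixes a single pivot $t$ and then tries to force $f|_{\{a,c\}}$ to behave like $s$ by analysing which meet of coordinates it is; but there may be no witness $w$ taking the value $a$ on any coordinate that $f|_{\{a,c\}}$ depends on, and in that residual case you only promise to ``show $\mathbb{A}$ is collapsible'' without an argument. Since the lemma is a disjunction, and you establish the $p_1,p_2$ branch only under an unverified side condition while leaving the collapsibility branch entirely unproven (you flag it as ``the principal obstacle''), the proof is incomplete precisely where the work lies.

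Two smaller remarks. First, the outer $s(\cdot,y)$ wrapper in your $p_1$ buys you nothing: at $(c,a)$ it returns $a$ whenever the inner value is $a$, so it does not repair the bad case. Second, the three-way split into ``Class $(i)$ singleton / Class $(ii)$ singleton / degenerate'' is the setup for the later propositions, not for this lemma, which the paper proves uniformly; importing it here only obscures that the relevant quantification is over all witnesses of all Class $(i)$ coordinates, not over witnesses of one chosen pivot.
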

\begin{proof}
Consider a tuple $\overline{x}$ over $\{a,b\}$ that witnesses the breaking of $\{a,c\}\{b,c\}$-projectivity for some Class $(i)$ variable from $a$ to $b$; so $f(\overline{x})=b$. Let $\widetilde{x}$ be $\overline{x}$ with the $a$s substituted by $c$ and the $b$s substituted by $a$. If, for each such  $\overline{x}$ over $\{a,b\}$ that witnesses the breaking of $\{a,c\}\{b,c\}$-projectivity for each Class $(i)$ variable, we find $f(\widetilde{x})=a$, then $f(s(x_1,x'_1),\ldots,s(x_k,x'_k))$ is a Hubie-pol in $\{b\}$. Thus, for some such  $\overline{x}$ we find $f(\widetilde{x})=c$. By collapsing the variables according to the division of $\overline{x}$ and $\widetilde{x}$ we obtain a binary function $p_1$ so that $p_1(a,b)=b$ and $p_1(c,a)=c$. We may also see that $p_1(b,a)=c$, since Classes $(i)$ and $(ii)$ are non-empty.

Dually, we consider tuples $\overline{x}$ over $\{a,b\}$ that witnesses the breaking of $\{a,c\}\{b,c\}$-projectivity for Class $(ii)$ variables from $b$ to $a$ to derive a function $p_2$ so that  $p_2(a,b)=a$, $p_2(b,c)=p(b,a)=c$.
\end{proof}

\subsection{The asymmetric case: Class $(i)$ is a singleton and there exists $\overline{z} \in \{a,b\}^*$ so that $f(a|b,\ldots,b|\overline{z})=b$}

We will address the case in which Class $(i)$ is a singleton and there exists $\overline{z} \in \{a,b\}^*$ so that $f(a|b,\ldots,b|\overline{z})=b$ (the like case with Class $(ii)$ being singleton itself being dual).

\begin{proposition}
\label{prop:asymmetric}
Let $f$ be so that Class $(i)$ is a singleton and there exists $\overline{z} \in \{a,b\}^*$ so that $f(a|b,\ldots,b|\overline{z})=b$. Then, either $f$ generates a binary idempotent operation with $ab \mapsto a$ and $ac \mapsto c$, or any algebra on $D$ containing $f$ and $s$ is collapsible.
\end{proposition}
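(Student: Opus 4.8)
The plan is to prove the contrapositive: assuming $(D;f,s)$ is not collapsible (equivalently, that some algebra on $D$ containing $f$ and $s$ is not collapsible), I will produce the required binary operation. Non-collapsibility feeds Lemma~\ref{lem:fun}, which hands us the binary terms $p_1,p_2$ of $\langle f,s\rangle$. Since $\{a,c\}$ is a subalgebra, $p_2(a,c)\in\{a,c\}$; if $p_2(a,c)=c$ then $p_2$ is already a binary idempotent operation with $ab\mapsto a$ and $ac\mapsto c$, and we are done. So from now on assume $p_2(a,c)=a$, hence (with idempotence and $p_2(a,b)=a$) that $p_2(a,\cdot)$ is the constant $a$; this structural fact says that $p_2$ alone can never separate $b$ from $c$ over the first argument $a$, so the operation we want will have to be built around $f$ itself.

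Next I would locate the right witness inside the non-projectivity data. Because coordinate $1$ is the (singleton) Class~$(i)$ coordinate and Class~$(i)$ never goes from $b$ to $a$ on $\{a,b\}$-words, any $\{a,b\}$-word $\overline{w}$ that breaks $\{a,c\}\{b,c\}$-projectivity from $b$ to $a$ at some Class~$(ii)$ coordinate $j_0$ is forced to have $w_1=a$; fix one such $\overline{w}$, so $f(\overline{w})=a$, $w_1=a$ and $w_{j_0}=b$. Now build a binary term $g$ by feeding the variable $x$ into every coordinate $i$ with $w_i=a$ and the variable $y$ into every coordinate with $w_i=b$. Then $g(a,b)=f(\overline{w})=a$, while $g(a,c)=f(\widehat{w})$ where $\widehat{w}$ is the word over $\{a,c\}$ that agrees with $\overline{w}$ on the ``$a$'' coordinates and carries $c$ on the ``$b$'' coordinates; since $\{a,c\}$ is a subalgebra, $g(a,c)\in\{a,c\}$. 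If $g(a,c)=c$ we are finished.

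The remaining case is $g(a,c)=a$, i.e.\ $f(\widehat{w})=f(\overline{w})=a$, and this is where the real work lies. Here I would play this pair of equalities against the standing hypothesis $f(a\,|\,b,\dots,b\,|\,\overline{z})=b$ using $s,p_1,p_2$: the equalities $f(\overline{w})=f(\widehat{w})=a$ let one swap the ``$w_i=b$'' coordinates between $b$ and $c$ (via $s$) while holding the output at $a$, whereas the hypothesis word is a place where the output is $b$ and whose Class~$(ii)$ coordinates all carry $b$; combining these through $p_1$ (recall $p_1(a,b)=b$, $p_1(b,a)=p_1(c,a)=c$) should let one isolate a coordinate at which $b$ may be substituted while still realising every element of $D$ in the output --- i.e.\ witness collapsibility, concretely via a Hubie-pol in $\{b\}$. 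I expect this last step to be the main obstacle: making the substitutions line up requires careful bookkeeping of how the Class~$(i)$, $(ii)$ and $(iii)$ coordinates behave, using only the closure of $\{a,c\}$ and $\{b,c\}$ and the ``never $a\to b$''/``never $b\to a$'' constraints, and it is exactly here that the singleton-ness of Class~$(i)$ and the precise shape of the hypothesis word are used.
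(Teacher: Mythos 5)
Your setup of the easy branch is sound and close in spirit to the paper's: identifying variables along a projectivity-breaking word $\overline{w}$ with $f(\overline{w})=a$ and testing whether the $b\mapsto c$ substitution returns $c$ is exactly how the paper extracts the binary operation with $ab\mapsto a$, $ac\mapsto c$ (your preliminary observation that $p_2(a,c)=c$ would already finish is a nice shortcut the paper does not use). But the proof is essentially missing its main half. When every such substituted word still evaluates to $a$, the paper does not wave at a combination of $s$, $p_1$, $p_2$: it runs a detailed case analysis ($\ell=0$ versus $\ell\geq 1$; the value of $f(a|b,\ldots,b|b,\ldots,b)$; then $f(a|c,\ldots,c|c,\ldots,c)=c$ or $a$) and in each case gives an explicit composition scheme showing how the full adversary $(D,\ldots,D)$ is bootstrapped from collapsed adversaries built over the sources $\{a\}$, $\{b\}$ and $\{c\}$ together. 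You acknowledge that this step is "the main obstacle" and leave it unproved, so as it stands the argument establishes only the dichotomy's first horn.

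Moreover, the specific target you aim for in that final step --- "witness collapsibility, concretely via a Hubie-pol in $\{b\}$" --- is too strong and will fail in general. A Hubie-pol in $\{b\}$ gives collapsibility from the singleton source $\{b\}$, but the constructions in this regime only yield collapsibility from a non-singleton source: e.g.\ in the paper's Case 1 the full adversary is assembled from $(D^{m-1},\{a\}^{M-m+1})$ \emph{and} $(D^{m-1},\{b\}^{M-m+1})$ jointly, and Cases 2a/2b additionally need the $\{c\}$-source adversaries. The last section of the paper exhibits an algebra of precisely this flavour ($(D;s,t)$) that is collapsible from $\{a,b\}$ but from no singleton source, so no amount of bookkeeping will in general produce the Hubie-pol you are hoping for; the argument has to be rebuilt around multi-source adversary composition rather than a single Hubie-pol.
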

\begin{proof}
Let us consider the general form of $f$,
\[
\begin{array}{c|ccc|ccccc}
a & b & \cdots & b & z_0^0 & \cdots & z_0^{\ell'} & \ & b \\
a & y^1_1 & \cdots & y^{k'}_1 & z_1^1 & \cdots & z_1^{\ell'} & \ & a \\
\vdots & \vdots & \cdots & \vdots & \vdots & \cdots & \vdots & \mapsto & \vdots \\
a & y^1_{m'} & \cdots & y^{k'}_{m'} & z_{m'}^1 & \cdots & z_{m'}^{\ell'} & \ & a \\
\end{array}
\]
where the $y$s and $z$s are from $\{a,b\}$ and we can assume that each $(y^i_1,\ldots,y^i_{m'})$ contains at least one $b$ and also each $(z^i_1,\ldots,z^i_{m'})$ contains at least one $b$. For the latter assumption recall that in Class $(iii)$ we can always find some break of $\alpha\beta$-projectivity from $b$ to $a$. Note that by expanding what we previously called Class $(ii)$ we can build, by possibly identifying variables, a function $f'$ of the form
\[
\begin{array}{c|ccc|ccccc}
a & b & \cdots & b & a & \cdots & a & \ & b \\
a & y^1_1 & \cdots & y^{k}_1 & z_1^1 & \cdots & z_1^{\ell} & \ & a \\
\vdots & \vdots & \cdots & \vdots  & \vdots & \cdots & \vdots & \mapsto & \vdots \\
a & y^1_{m} & \cdots & y^{k}_{m} & z_{m}^1 & \cdots & z_{m}^{\ell} & \ & a \\
\end{array}
\]
where the $y$s and $z$s are from $\{a,b\}$ and we can assume each $(y^i_1,\ldots,y^i_{m})$ contains at least one $b$ and also each $(z^i_1,\ldots,z^i_{m})$ contains a least one $b$. Note that we do not claim the new vertical bars correspond to delineate between Classes $(i)$, $(ii)$ and $(iii)$ under their original definitions, since this is not important to us. We will henceforth assume that $f$ is in the form of $f'$.

Let $x^i_j$ (resp., $v^i_j$) be $a$ if $y^i_j$ (resp., $z^i_j$) is $a$, and be $c$ if $y^i_j$ (resp., $z^i_j$) is $b$. That is, $(x^1_{j}, \ldots,  x^{k}_{j}, v_{j}^1\ \ldots, v_{j}^{\ell})$ is built from $(y^1_{j}, \ldots,  y^{k}_{j}, z_{j}^1\ \ldots, z_{j}^{\ell})$ by substituting $b$s by $c$s. Suppose one of $f(a|x^1_1,\ldots,x^k_1| v_1^1\ \ldots, v_1^{\ell})$, \ldots, $f(a|x^1_m,\ldots,x^k_m | v_{m}^1\ \ldots, v_{m}^{\ell})$ is $c$. Then $f$ generates an idempotent binary operation with $ab \mapsto a$ and $ac \mapsto c$. Thus, we may assume that each of  $f(a|x^1_1,\ldots,x^k_1 | v_1^1\ \ldots, v_1^{\ell})$, \ldots, $f(a|x^1_m,\ldots,x^k_m|$ $v_m^1\ \ldots, v_m^{\ell}))$ is $a$. We now move to consider some cases.

(Case 1: $\ell=0$, \mbox{i.e.} there is nothing to the right of the second vertical bar.) From adversaries of the form $(\{a\}^M)$ and $(\{a,b\}^{m-1},\{b\}^{M-m+1})$ this supports construction of $(\{a,b\}^{m},\{b\}^{M-m})$ and all co-ordinate permutations. We illustrate this with the following diagram which makes some assumptions about the locations of the $b$s in each $(y^i_1,\ldots,y^i_{m})$; nonetheless it should be clear that the method works in general since there is at least one $b$ in $(y^i_1,\ldots,y^i_{m})$.
\[
\begin{array}{c|cccccc}
\{a\} & \{a,b\} & \{a,b\} & \cdots & \{a,b\} & & \{a,b\} \\
\vdots & \vdots & \vdots & \cdots & \vdots & & \{a,b\} \\
\{a\} & \{a,b\} & \{a,b\} & \cdots & \{a,b\} & & \{a,b\} \\
\{a\} & \{b\} & \{a,b\} & \cdots & \{a,b\} & & \{a,b\} \\
\{a\} & \{a,b\}  & \{b\} & \cdots & \{a,b\} & & \{a,b\} \\
\vdots & \vdots & \vdots & \cdots & \vdots & \mapsto & \{a,b\} \\
\{a\} & \{a,b\} & \{a,b\} & \cdots & \{b\} & & \{a,b\} \\
\{a\} & \{b\} & \{b\} & \cdots & \{b\} & & \{b\} \\
\vdots & \vdots & \vdots & \cdots & \vdots & & \{b\} \\
\{a\} & \{b\} & \{b\} & \cdots & \{b\} & & \{b\} \\
\end{array}
\]
Applying $s$ it is clear that the full adversary may be built from, for example, $(D^{m-1},\{a\}^{M-m+1})$ and $(D^{m-1},\{b\}^{M-m+1})$ which demonstrates $(m-1)$-collapsibility.

(Case 2: $\ell\geq 1$.) Here we consider what is $f(a|b,\ldots,b|b,\ldots,b)$. If this is $b$ then we can clearly reduce to the previous case. If it is $a$ then $f(s(x_1,x_1'),\ldots,$ $s(x_{k+\ell+1},x'_{k+\ell+1}))$ is a generalised Hubie-pol in both $aa|bb,\ldots,bb|aa,\ldots,aa$ and $aa|bb,\ldots,bb|bb,\ldots,bb$, and we are collapsible. This is because the composed function on these listed tuples gives $b$ and $a$, respectively, thus permitting to build adversaries of the form $(\{a,b\}^{k+\ell+2},\{a\}^{M-k-\ell-2})$ and  $(\{a,b\}^{k+\ell+2},\{b\}^{M-k-\ell-2})$ from adversaries of the form $(\{a,b\}^{k+\ell+1},\{a\}^{M-k-\ell-1})$ and  $(\{a,b\}^{k+\ell+1},$ $\{b\}^{M-k-\ell-1})$ (\mbox{cf.} Case 1).

Thus, we may assume $f(a|b,\ldots,b|b,\ldots,b)=c$. Using the fact that $f(s(x_1,x_1'),$ $\ldots,s(x_{k+\ell+1},x'_{k+\ell+1}))$ is a generalised Hubie-pol in $aa|bb\ldots bb|aa \ldots aa$ we can build (using $s$ and rather like in Case 1), from adversaries of the form $(\{a\}^{M})$ and  $(D^{(m-1)i},\{b\}^{M-(m-1)i})$, adversaries of the form $(D^{mi},\{b\}^{M-mi})$, and all co-ordinate permutations of this. Similarly, using the fact that $f(s(x_1,x_1'),\ldots,$ $s(x_{k+\ell+1},x'_{k+\ell+1}))$ is a generalised Hubie-pol in $aa|bb\ldots bb|bb \ldots bb$, we can build adversaries of the form $(D^{mi},\{c\}^{M-mi})$. 

(Case 2a: $f(a|c,\ldots,c|c,\ldots,c)=c$.) Consider again
\[
\begin{array}{c|ccc|ccccc}
a & x^1_1 & \cdots & x^{k}_1 & v_1^1 & \cdots & v_1^{\ell} & \ & a \\
\vdots & \vdots & \cdots & \vdots  & \vdots & \cdots & \vdots & \mapsto & \vdots \\
a & x^1_{m} & \cdots & x^{k}_{m} & v_{m}^1 & \cdots & v_{m}^{\ell} & \ & a \\
a & c & \cdots & c & c & \cdots & c & & c\\
\end{array}
\]
where each $(x^i_1,\ldots,x^i_{m})$ and $(v^i_1,\ldots,v^i_{m})$ contains at least one $c$. By amalgamating Classes $(ii)$ and $(iii)$ we obtain some function with the form
\[
\begin{array}{c|ccccc}
a & u^1_1 & \cdots & x^{\nu}_1 & & a \\
\vdots & \cdots & \vdots & & & \vdots \\
a & u^1_m & \cdots & x^{\nu}_m & & a \\
a & c & \cdots & c & & c\\
\end{array}
\]
where each $(u^i_1,\ldots,u^i_{m})$ is in $\{a,c\}^*$ and contains at least one $c$. From adversaries of the form $(D^{r+m-1}, \{c\}^{M-r-m+1})$ and $(\{a\}^{M})$ we can build $(D^{r},\{a,c\}^{M-r})$, and all co-ordinate permutations. We begin, pedagogically preferring to view some $D$s as $\{a,c\}$s,
\[
\begin{array}{c|cccccc}
\{a\} & D & D & \cdots & D & & D \\
\vdots & \vdots & \vdots & \cdots & \vdots & & D \\
\{a\} & D & D & \cdots & D & & D \\
\{a\} & \{c\}  &  \{a,c\} & \cdots &  \{a,c\} & & \{a,c\} \\
\{a\} &  \{a,c\}  & \{c\} & \cdots &  \{a,c\} & & \{a,c\} \\
\vdots & \vdots & \vdots & \cdots & \vdots & \mapsto & \{a,c\} \\
\{a\} &  \{a,c\}  &  \{a,c\} & \cdots & \{c\} & & \{a,c\} \\
\{a\} & \{c\}  & \{c\} & \cdots & \{c\} & & \{c\} \\
\vdots & \vdots & \vdots & \cdots & \vdots & & \{c\} \\
\{a\} & \{c\}  & \{c\} & \cdots & \{c\} & & \{c\} \\
\end{array}
\]
and follow with bottom parts of the form
\[
\begin{array}{c|ccccccccc}
\vdots & \vdots & \vdots & \cdots & \vdots & \mapsto & \{c\} \mbox{ or } \{a,c\} \\
\{a\} & \{a,c\}  & \{a,c\} & \cdots & \{a,c\} & & \{a,c\}. \\
\end{array}
\]
This now supports bootstrapping of the full adversary from adversaries of the form $(D^{m^2},\{a\}^{M-m^2})$, $(D^{m^2},$ $\{b\}^{M-m^2})$ and $(D^{m^2},\{c\}^{M-m^2})$. 

(Case 2b:  $f(a|c,\ldots,c|c,\ldots,c)=a$.) Here, from adversaries of the form $(D^{r+m-1}, \{c\}^{M-r-m+1})$ and $(\{a\}^{M})$ we can directly build $(D^{r+m-1}, \{a\}^{M-r-m+1})$.
\[
\begin{array}{c|cccccc}
\{a\} & D & D & \cdots & D  & & D \\
\vdots & \vdots & \vdots & \cdots & \vdots  & & D \\
\{a\} & D & D & \cdots & D  & & D \\
\{a\} & \{c\}  & \{c\} & \cdots & \{c\} & & \{a\} \\
\vdots & \vdots & \vdots & \cdots & \vdots  & \mapsto & \{a\} \\
\{a\} & \{c\}  & \{c\} & \cdots & \{c\}  & & \{a\} \\
\end{array}
\]
This now supports bootstrapping of the full adversary, similarly as in Case 2a (but slightly simpler).
\end{proof}

Let $\overline{x}:=x_1,\ldots,x_k$ and $\overline{y}:=y_1,\ldots,y_k$ be words over $\{a,b\} \ni x,y$. Let $\wedge(x,y)=a$ if $a \in \{x,y\}$ and $b$ otherwise. Let $\vee(x,y)=b$ if $b \in \{x,y\}$ and $a$ otherwise. This corresponds with considering $a$ as $\bot$ and $b$ as $\top$. Define $\wedge(\overline{x},\overline{y}):=(\wedge(x_1,y_1),\ldots,\wedge(x_k,y_k))$ and $\vee(\overline{x},\overline{y}):=(\vee(x_1,y_1),\ldots,\vee(x_k,y_k))$.  We are most interested in words 
\begin{itemize}
\item[A] $(\overline{x} | a,\ldots,a | \overline{z})$, such that $f(\overline{x} | a,\ldots,a | \overline{z})=a$, and for no $\overline{x}'\neq \overline{x}$ and $\overline{z}'$ over $\{a,b\}$ do we have  $(\overline{x}' | a,\ldots,a | \overline{z}')$ with $\vee(\overline{x},\overline{x}')=\overline{x}'$ so that $f(\overline{x}' | a,\ldots,a | \overline{z}')=a$.
\item[B] $(b,\ldots,b | \overline{y} | \overline{z})$, such that $f(b,\ldots,b | \overline{y} | \overline{z})=b$, and for no $\overline{y}'\neq \overline{y}$  and $\overline{z}'$ over $\{a,b\}$ do we have $(b,\ldots,b | \overline{y}' | \overline{z}')$ with $\wedge(\overline{y},\overline{y}')=\overline{y}'$ so that $f(b,\ldots,b | \overline{y}' | \overline{z}')=b$.
\end{itemize}
Such $\overline{x}$ and $\overline{y}$ are in a certain sense \emph{maximal}, but the sense of maximality is dual in Case B from Case A. $\overline{x}$ is maximal under inclusion for the number of $b$s it contains and $\overline{y}$ is maximal under inclusion for the number of $a$s it contains. In the asymmetric case that we consider here w.l.o.g., only Case A above will be salient, but we introduce both now for pedagogical reasons.

\begin{lemma}
\label{lem:r4-asymmetric}
Let $f$ be so that Class $(i)$ is a singleton and there exists $\overline{z} \in \{a,b\}^*$ so that $f(a|b,\ldots,b|\overline{z})=b$. Then any algebra over $D$ containing $f$ and $s$ is either collapsible or has a $4$-ary term operation $r_4$ so that
\[
\begin{array}{ccc}
abab & r_4 & a \\
abba & \rightarrow & a \\
abbb & & c \\
\end{array}
\]
\end{lemma}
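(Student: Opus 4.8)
Assume the algebra is not collapsible; we produce $r_4$. By Proposition~\ref{prop:asymmetric} there is a binary term $p$ with $p(a,b)=a$ and $p(a,c)=c$, and we may assume $f$ is in the form $f'$ of that proof. Moreover, were $f(a|b,\ldots,b|b,\ldots,b)$ equal to $a$ or to $b$, the analysis of Cases~1 and~2 in the proof of Proposition~\ref{prop:asymmetric} would already make the algebra collapsible; hence $f(a|b,\ldots,b|b,\ldots,b)=c$, and in particular the third block of $f'$ is non-empty ($\ell\geq1$), since $f(a|b,\ldots,b|a,\ldots,a)=b$. Write $N=1+k+\ell$ for the arity; the top row is $R_0=(a|b,\ldots,b|a,\ldots,a)$ with $f(R_0)=b$, and the remaining rows $R_1,\ldots,R_m$ begin with $a$, map to $a$, and have a $b$ in each of their columns. (Note $N\geq4$: for $N=3$ one would have $k=\ell=1$, and a row $R_i$ carrying a $b$ in the middle coordinate would satisfy $f(R_i)=a$, whereas $f(a|b|a)=b$ and $f(a|b|b)=c$.)

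The plan is to obtain $r_4$ as a collapse of $f$ onto four blocks of coordinates. For $T\subseteq\{2,\ldots,N\}$ put $\phi(T):=f(a\,|\,b\text{ on }T,\ a\text{ on }\{2,\ldots,N\}\setminus T)$, so $\phi(\emptyset)=a$, $\phi(\{2,\ldots,k+1\})=b$, and $\phi(\{2,\ldots,N\})=c$. It is enough to find a partition $\{1,\ldots,N\}=\{1\}\cup P_2\cup P_3\cup P_4$ with $P_2,P_3,P_4$ non-empty and $\phi(P_2\cup P_3)=\phi(P_2\cup P_4)=a$: identifying coordinate $1$ with the first variable of $r_4$ and $P_2,P_3,P_4$ with the second, third and fourth, one gets $r_4(a,b,b,b)=\phi(\{2,\ldots,N\})=c$, $r_4(a,b,a,b)=\phi(P_2\cup P_4)=a$, and $r_4(a,b,b,a)=\phi(P_2\cup P_3)=a$, as wanted. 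Since $N\geq4$ there is room for three non-empty blocks.

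To construct the partition we use the maximal words of type~A above. Let $T^{*}$ be the $b$-support of a $\vee$-maximal type~A word, so $\phi(T^{*})=a$ while $\phi(T)\neq a$ for every $T\supsetneq T^{*}$. In the favourable case there is a second $\vee$-maximal type~A word, with $b$-support $T^{**}\neq T^{*}$ satisfying $T^{*}\cup T^{**}=\{2,\ldots,N\}$ and $T^{*}\cap T^{**}\neq\emptyset$; then $P_2:=T^{*}\cap T^{**}$, $P_3:=T^{*}\setminus T^{**}$, $P_4:=T^{**}\setminus T^{*}$ works, since two distinct $\vee$-maximal sets are incomparable, so $P_3$ and $P_4$ are non-empty. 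Otherwise we analyse the effect of adjoining one further $b$ to $T^{*}$: if some coordinate forces $\phi$ to the value $c$, we combine that coordinate with $T^{*}$, $s$ and $p$ to exhibit $r_4$ directly (or, failing that, collapsibility); and if every such coordinate only ever forces $\phi$ to $b$, then the configuration ``$\phi(T^{*})=a$, but each added $b$ flips the value'' is exactly of the shape exploited in Cases~1 and~2 of Proposition~\ref{prop:asymmetric}, so together with $s$ it bootstraps the full adversary from a bounded source, contradicting non-collapsibility. The binary operations $p_1,p_2$ of Lemma~\ref{lem:fun} are brought in to repair the coordinates on which a crude choice of $T^{**}$ would fail $\phi=a$.

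I expect the main obstacle to be precisely this last dichotomy: when the maximal type~A word is essentially unique, one must show, using the detailed shape of $f'$ (the row $R_0$ mapping to $b$, the non-empty Class~$(iii)$ block, and the row valued $c$ furnished by Proposition~\ref{prop:asymmetric}) together with $p,p_1,p_2$ and $s$, that every remaining configuration either displays the desired four-block collapse or else collapses the algebra. I anticipate a short tree of sub-cases rather than a single uniform argument.
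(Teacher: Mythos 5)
Your overall strategy (maximal ``type A'' words, collapsing blocks of coordinates of $f$ onto the four arguments of $r_4$) is in the same spirit as the paper's, and your preliminary observations ($\ell\geq 1$ and $f(a|b,\ldots,b|b,\ldots,b)=c$ under non-collapsibility, via Cases 1 and 2 of Proposition~\ref{prop:asymmetric}) are sound. But the proof is not complete, and you say so yourself; the case you defer is not a routine residue but the heart of the lemma. Concretely, your ``favourable case'' needs two $\vee$-maximal supports $T^{*},T^{**}$ with $T^{*}\cup T^{**}=\{2,\ldots,N\}$, and nothing you prove supplies this: two incomparable maximal sets need not jointly cover all coordinates, and for a proper superset $T\supsetneq T^{*}$ maximality only tells you $\phi(T)\neq a$, so the third row of your would-be $r_4$ could evaluate to $b$ rather than the required $c$. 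The ``otherwise'' branch (``if some coordinate forces $\phi$ to $c$, we combine that coordinate with $T^{*}$, $s$ and $p$ to exhibit $r_4$ directly (or, failing that, collapsibility)'') is a statement of intent rather than an argument, and the sub-case where ``every added $b$ only flips the value to $b$'' is asserted to bootstrap the full adversary without any construction. Your closing paragraph concedes exactly this.

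For comparison, the paper's proof avoids the covering requirement altogether. It first disposes of the case that some $\overline{z}'$ gives $f(a|b,\ldots,b|\overline{z}')=a$: together with the hypothesised $\overline{z}$ of value $b$, the $s$-doubled $f$ is then a generalised Hubie-pol in two patterns and the algebra is collapsible. In the remaining case it extracts $\overline{y}_1,\overline{y}_2,\overline{z}_1,\overline{z}_2$ with $f(a|\overline{y}_1|\overline{z}_1)=f(a|\overline{y}_2|\overline{z}_2)=a$ but $f(a|\vee(\overline{y}_1,\overline{y}_2)|\overline{z}_1)\neq a$, collapses coordinates to a $4$-ary $f'$ with $aabb\mapsto a$, $abab\mapsto a$ and $abbb\mapsto a$ or $c$, and only afterwards forces the value $c$ by permuting coordinates and recombining arguments through $s$. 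In other words, the value $c$ on the ``join'' row is manufactured at the end rather than demanded of the maximal words up front, which is precisely what lets the two-word construction go through without the union being the whole coordinate set. To finish your argument you would need either to prove the covering and intersection properties you assume, or to adopt the paper's two-word join followed by the final $s$-recombination; as written, the lemma is not proved.
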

\begin{proof}
Recall $\exists \overline{z}$ so that $f(a|b,\ldots,b|\overline{z})=b$. Note that if exists $\overline{z}'$ over $\{a,b\}$ so that $f(a|b,\ldots,b|\overline{z}')=a$ then we have that $f(s(v_1,v'_1),\ldots,s(v_{k+\ell+1},v'_{k+\ell+1}))$ is a generalised Hubie-pol in both $bb\ldots bb|aa \ldots aa|\widehat{z}$ and $bb\ldots bb|aa \ldots aa|\widehat{z}'$, where we build widehat from overline by doubling each entry where it sits, and we become collapsible. It therefore follows that there must exist distinct $\overline{y}_1$, $\overline{y}_2$, $\overline{z}_1$ and $\overline{z}_2$ (all over $\{a,b\}$) so that $f(a|\overline{y}_1|\overline{z}_1)=a$, $f(a|\overline{y}_2|\overline{z}_2)=a$ but $f(a|\vee(\overline{y}_1,\overline{y}_2)|\overline{z}_1)\neq a$. By collapsing co-ordinates we get $f'$ so that
\[
\begin{array}{ccc}
aabb & f' & a \\
abab &  \rightarrow & a \\
abbb & & \mbox{$a$ or $c$} \\
\end{array}
\]
The result follows by permuting co-ordinates, possibly in new combination through $s$ and the second co-ordinate.
\end{proof}

\subsection{The symmetric case: for every $\overline{z} \in \{a,b\}^*$ we have $f(a,\ldots,a|b,\ldots,b|\overline{z})=c$}

\begin{proposition}
\label{prop:symmetric}
Let $f$ be so that neither Class $(i)$ nor Class $(ii)$ is a singleton and so that for every $\overline{z} \in \{a,b\}^*$ we have $f(a,\ldots,a|b,\ldots,b|\overline{z})=c$. Then, either $f$ generates a binary idempotent operation with $ab \mapsto a$ and $ac \mapsto c$ or a binary idempotent operation with $ab \mapsto b$ and $cb \mapsto c$, or any algebra on $D$ containing $f$ and $s$ is collapsible.
\end{proposition}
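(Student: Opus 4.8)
My plan is to follow the proof of Proposition~\ref{prop:asymmetric} as closely as possible, the essential new feature being that now \emph{both} the Class~$(i)$ block and the Class~$(ii)$ block of $f$ are genuinely present (each of size at least two) and that the base pattern $(a,\ldots,a\mid b,\ldots,b\mid\overline z)$ maps to $c$ rather than to $b$. The $a\leftrightarrow b$ symmetry of this hypothesis --- which simultaneously interchanges Class~$(i)$ with Class~$(ii)$ and the two candidate binary operations $ab\mapsto a,\,ac\mapsto c$ and $ab\mapsto b,\,cb\mapsto c$ --- is exactly what forces the statement to offer two escapes instead of one, and it lets us write out only one representative from each dual pair of sub-cases. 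As in Proposition~\ref{prop:asymmetric}, the first move is to normalise $f$: composing each argument with $s$ we may take every witness of broken $\{a,c\}\{b,c\}$-projectivity to lie in $\{a,b\}$, and by identifying coordinates inside each of the three blocks we put $f$ into a tableau whose base row $(a,\ldots,a\mid b,\ldots,b\mid z^1,\ldots,z^\ell)$ maps to $c$ for every $\overline z\in\{a,b\}^\ell$, which has at least one row mapping to $b$ (a Class~$(i)$ witness) and at least one row mapping to $a$ (a Class~$(ii)$ witness), and in which every column of every witness block carries at least one projectivity-breaking entry.

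Next I would run the extraction step. If some row $\overline v\in\{a,b\}^k$ with $f(\overline v)=a$ satisfies $f(\overline v')=c$, where $\overline v'$ is $\overline v$ with every $b$ replaced by $c$, then collapsing the $b$-coordinates to one variable and the $a$-coordinates to another yields a binary idempotent term with $ab\mapsto a$ and $ac\mapsto c$; dually, a row with value $b$ on which replacing every $a$ by $c$ produces $c$ yields a binary idempotent term with $ab\mapsto b$ and $cb\mapsto c$. If neither succeeds, then all these substitutions land in $\{a,b\}$, and we invoke the $\widetilde x$-substitution of Lemma~\ref{lem:fun} (replace $a\mapsto c$, $b\mapsto a$ in the witnesses, then compose through $s$): if every such substituted witness evaluates the ``wrong'' way, then $f(s(x_1,x_1'),\ldots,s(x_k,x_k'))$ is, on the appropriate block-constant tuples, a generalised Hubie-pol feeding both an $\{a\}$-source and a $\{b\}$-source, hence $\langle f,s\rangle$ is collapsible. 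So from now on we may assume that no binary operation and no Hubie-pol can be extracted; the upshot is that the behaviour of $f$ on tuples over $\{a,c\}$ and over $\{b,c\}$ is now tightly constrained.

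Finally comes the adversary-composition casework, mirroring Cases~1, 2, 2a and 2b of Proposition~\ref{prop:asymmetric} but now two-sided. When $\ell=0$, using the column-breaks of both blocks together with $s$ one shows that from adversaries of the forms $(D^{m-1},\{a\}^{M-m+1})$, $(D^{m-1},\{b\}^{M-m+1})$ and $(D^{m-1},\{c\}^{M-m+1})$, and all their coordinate permutations, one $f$- and $s$-composes first $(D^m,X^{M-m})$ for $X\in\{\{a\},\{b\},\{c\}\}$ and then, inductively, the full adversary $(D,\ldots,D)$; the building block is the triangular ``staircase'' diagram of Proposition~\ref{prop:asymmetric}, now with a staircase on each block, giving $(m-1)$-collapsibility. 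When $\ell\geq1$ one branches on the values $f(a,\ldots,a\mid c,\ldots,c\mid c,\ldots,c)$, $f(c,\ldots,c\mid b,\ldots,b\mid c,\ldots,c)$ and $f(a,\ldots,a\mid b,\ldots,b\mid c,\ldots,c)$ --- recall $f(a,\ldots,a\mid b,\ldots,b\mid b,\ldots,b)$ is already forced to be $c$ --- amalgamating the remaining classes and running the $\{a,c\}$- or $\{b,c\}$-staircase as in Case~2a whenever the relevant value is $c$, and building the required $D$-block adversaries directly as in Case~2b whenever it is $a$ or $b$. By the $a\leftrightarrow b$ symmetry these sub-cases collapse into a small number of dual pairs.

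I expect the main obstacle to be the adversary arithmetic in the last step: with two active blocks the staircases become two-dimensional, and one must verify that the $\{a\}$-, $\{b\}$- and $\{c\}$-source adversaries really do bootstrap all the way up to $(D,\ldots,D)$, with a collapsibility parameter bounded in terms of the (fixed) arity of $f$ rather than of the adversary length $M$, and that the branching on the all-$c$ evaluations is genuinely exhaustive once the two binary escapes and the Hubie-pol escape have been removed. A secondary nuisance is bookkeeping consistency: the coordinate identifications used to read off $ab\mapsto a,\,ac\mapsto c$ or $ab\mapsto b,\,cb\mapsto c$ must be checked not to conflict with the identifications already made while normalising $f$, i.e.\ that forcing good behaviour on one block does not silently destroy the projectivity-breaks in the other.
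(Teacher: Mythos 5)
Your proposal is correct and follows essentially the same route as the paper: normalise $f$ via $s$ and coordinate identifications, extract the binary operations $ab\mapsto a,\ ac\mapsto c$ or $ab\mapsto b,\ cb\mapsto c$ from the $c$-substituted witness rows (or assume they still evaluate in $\{a,b\}$), use the hypothesis $f(a,\ldots,a\mid b,\ldots,b\mid\overline z)=c$ to manufacture $\{c\}$-block adversaries, and then case-split on $f(a,\ldots,a\mid c,\ldots,c\mid c,\ldots,c)$ and $f(c,\ldots,c\mid b,\ldots,b\mid c,\ldots,c)$, running the staircase constructions of Cases 2a/2b of Proposition~\ref{prop:asymmetric} on each side. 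The ``two-sided staircase'' bookkeeping you flag as the main obstacle is exactly the part the paper also leaves at the level of a single illustrative diagram, so nothing essential is missing relative to the published argument.
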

\begin{proof}
Let us consider the general form of $f$,
\[
\begin{array}{ccc|ccc|ccccc}
x^1_1 & \cdots & x^k_1 & b & \cdots & b & w_1^1 & \cdots & w_1^\ell & \ & b \\
\vdots & \vdots & \vdots & \vdots & \vdots & \vdots & \vdots & \vdots & \vdots & \mapsto & \vdots \\
x^1_m & \cdots & x^k_m & b & \cdots & b & w_m^1 & \cdots & w_m^\ell & \ & b \\
\\
a & \cdots & a & y^1_1 & \cdots & y^\kappa_1 & z_1^1 & \cdots & z_1^\ell & \ & a \\
\vdots & \vdots & \vdots & \vdots & \vdots & \vdots & \vdots & \vdots & \vdots & \mapsto & \vdots \\
a & \cdots & a & y^1_\mu & \cdots & y^\kappa_\mu & z_\mu^1 & \cdots & z_\mu^\ell & \ & a \\
\end{array}
\]
where the $x$s, $y$s, $z$s and $w$s are from $\{a,b\}$ and we can assume that each $(x^i_1,\ldots,x^i_m)$ and $(w^i_1,\ldots,w^i_m)$ contain at least one $a$ and $(y^i_1,\ldots,y^i_\mu)$ and $(z^i_1,\ldots,$ $z^i_\mu)$ contains at least one $b$. As in the previous proof we can make an assumption that each $(x^1_i, \cdots, x^k_i , b, \ldots , b , w_i^1 , \ldots , w_i^\ell)$, with $a$ substituted for $c$, still maps under $f$ to $b$. Similarly,  each $(a,\ldots,a,y^1_i, \cdots, y^\kappa_i, z_i^1 , \ldots , z_i^\ell)$, with $b$ substituted for $c$, still maps under $f$ to $a$.

Since for each  $\overline{z} \in \{a,b\}^*$ we have $f(a,\ldots,a|b,\ldots,b|\overline{z})=c$ we can deduce that from the adversaries $(D^{(k+\kappa+\ell-1)i},\{a\}^{M-(k+\kappa+\ell-1)i})$ and  $(D^{(k+\kappa+\ell-1)i},$ $\{b\}^{M-(k+\kappa+\ell-1)i})$, adversaries of the form $(D^{(k+\kappa+\ell)i},\{c\}^{M-(k+\kappa+\ell)i})$, and all co-ordinate permutations of this. 

We now make some case distinctions based on whether $f(a,\ldots,a|c,\ldots,c|$ $c,\ldots,c)=c$ or $a$ and $f(c,\ldots,c|b,\ldots,b|c,\ldots,c)=c$ or $b$ (note that possibly Class $(iii)$ is empty). However, the method for building the full adversary from certain collapsings proceeds very similarly to Cases 2a and 2b from Proposition~\ref{prop:symmetric}. We give an example below as to how, in the case $f(a,\ldots,a|c,\ldots,c|c,\ldots,c)=c$, we mimic Case 2a from Proposition~\ref{prop:asymmetric} to derive a function from this that builds, from adversaries of the form $(D^{r+m-1}, \{a\}^{M-(r+m-1)})$ and $(D^{r+2m-1},\{c\}^{M-(r+m-1)})$, we can build $(D^{r+m},\{a,c\}^{M-m-r})$. For pedagogic reasons we prefer to view some $D$s as $\{a,c\}$s,
\[
\begin{array}{cccc|cccccc}
\{a\} & D & \cdots & D & D & D & \cdots & D & & D \\
D & \{a\} & \cdots & D & D & D & \cdots & D & & D \\
\vdots & \vdots & \cdots & \vdots & \vdots & \vdots & \cdots & \vdots  & & D \\
D & D & \cdots & \{a\} & D & \cdots & D & D & & D \\
\{a\} & \{a\} & \cdots & \{a\}  & \{c\}  & \{a,c\} & \cdots & \{a,c\} & & \{a,c\} \\
\{a\} & \{a\} & \cdots & \{a\} & \{a,c\}  & \{c\} & \cdots & \{a,c\} & & \{a,c\} \\
\vdots & \vdots & \vdots & \vdots & \vdots & \vdots & \cdots & \vdots  & \mapsto & \{a,c\} \\
\{a\} & \{a\} & \cdots & \{a\} & \{a,c\} & \{a,c\} & \cdots & \{c\} & & \{a,c\} \\
\{a\} & \{a\} & \cdots & \{a\} &\{c\}  & \{c\} & \cdots & \{c\} & & \{c\} \\
\vdots & \vdots & \vdots & \vdots & \vdots & \vdots & \cdots & \vdots & & \{c\} \\
\{a\} & \{a\} & \cdots & \{a\} & \{c\}  & \{c\} & \cdots & \{c\} & & \{c\} \\
\end{array}
\]
\end{proof}

\begin{lemma}
\label{lem:r4-symmetric}
Let $f$ be so that neither Class $(i)$ nor Class $(ii)$ is a singleton and so that for every $\overline{z} \in \{a,b\}^*$ we have $f(a|b,\ldots,b|\overline{z})=c$. Any algebra over $D$ containing $f$ is either collapsible or contains a $4$-ary operations $r^a_4$ and $r^b_4$ with properties
\[
\begin{array}{ccc}
\begin{array}{ccc}
abab & r^a_4 & a \\
abba & \rightarrow & a \\
abbb & & c \\
\end{array}
& \ \ \ \ \ \  \mbox{\textbf{and}} \ \ \ \ \ \ &
\begin{array}{ccc}
abab & r^b_4 & b \\
abba & \rightarrow & b \\
abaa & & c \\
\end{array}
\end{array}
\]
\end{lemma}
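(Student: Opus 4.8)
The plan is to run essentially the argument of Lemma~\ref{lem:r4-asymmetric} twice, once on each of the two mutually dual halves of the form of $f$ supplied by Proposition~\ref{prop:symmetric}: the ``bottom half'' $(a,\ldots,a|\overline{y}|\overline{z})\mapsto a$ will yield $r^a_4$, and the ``top half'' $(\overline{x}|b,\ldots,b|\overline{w})\mapsto b$ will yield $r^b_4$ by the mirror computation (swap $a\leftrightarrow b$ and the first two blocks, and replace join by meet). So assume the algebra is not collapsible; Lemma~\ref{lem:fun} then supplies binary terms $p_1$ and $p_2$, and it is enough to extract $r^a_4$ from the bottom half.

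First I would set $G:=\{(\overline{y},\overline{z}) : f(a,\ldots,a|\overline{y}|\overline{z})=a\}$, with $\overline{y},\overline{z}$ ranging over words over $\{a,b\}$, ordered coordinatewise with $a<b$ so that $\vee$ is coordinatewise join. By Proposition~\ref{prop:symmetric} the listed bottom rows lie in $G$, have first block constantly $a$, and carry a $b$ in every second- and third-block column; hence the join of all of $G$ equals the all-$b$ tuple on the combined second-and-third block. The standing symmetric hypothesis gives $f(a,\ldots,a|b,\ldots,b|\overline{z})=c$ for every $\overline{z}$ over $\{a,b\}$, so the all-$b$ tuple is not in $G$, and therefore $G$ is not $\vee$-closed. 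I would then pick incomparable $(\overline{y}_1,\overline{z}_1),(\overline{y}_2,\overline{z}_2)\in G$ whose join lies outside $G$, so that $f(a,\ldots,a|\overline{y}_1\vee\overline{y}_2|\overline{z}_1\vee\overline{z}_2)\in\{b,c\}$; incomparability forces both ``mixed'' coordinate classes (first tuple $a$ and second $b$, and vice versa) to be non-empty.

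Next I would collapse coordinates: partition the combined second-and-third-block coordinates by the pair of values (in the first tuple, in the second tuple) $\in\{aa,ab,ba,bb\}$, keep one first-block coordinate, and merge coordinates that agree on the three rows $(\overline{y}_1,\overline{z}_1)$, $(\overline{y}_2,\overline{z}_2)$, $(\overline{y}_1\vee\overline{y}_2,\overline{z}_1\vee\overline{z}_2)$. Because the third block is joined together with the second, the bad-join row lands cleanly on the pattern $abbb$ with no stray coordinate surviving --- this is the one place where we streamline Lemma~\ref{lem:r4-asymmetric}, which instead absorbs a leftover coordinate through $s$. What remains is a term $g$, of arity $3$ or $4$, with $g(a,a,b,b)=a$, $g(a,b,a,b)=a$ and $g(a,b,b,b)\in\{b,c\}$. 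Finally I would put
\[ r^a_4(x_1,x_2,x_3,x_4)\ :=\ s\bigl(g(x_1,x_3,x_4,x_2),\,p_2(x_1,x_2)\bigr), \]
and check, using $p_2(a,b)=a$ and the fact that $s$ sends everything off the diagonal to $c$, that this evaluates to $a$, $a$, $c$ on $abab$, $abba$, $abbb$ respectively, which is exactly $r^a_4$. The mirror argument on the top half --- using coordinatewise meet, using $f(a,\ldots,a|b,\ldots,b|\overline{z})=c$ to keep the all-$a$ tuple out of the relevant set, and using $p_1(a,b)=b$ in place of $p_2$ --- produces $g^{\mathrm{top}}$ with $g^{\mathrm{top}}(a,a,b,b)=g^{\mathrm{top}}(a,b,a,b)=b$ and $g^{\mathrm{top}}(a,a,a,b)\in\{a,c\}$, so that $s\bigl(g^{\mathrm{top}}(x_1,x_3,x_4,x_2),\,p_1(x_1,x_2)\bigr)$ has the required values $b$, $b$, $c$ on $abab$, $abba$, $abaa$ and serves as $r^b_4$.

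The main obstacle is organisational rather than conceptual: one must keep the collapse honest and dispose of the handful of degenerate cases in which a pattern class is empty (then $g$ or $g^{\mathrm{top}}$ simply loses the corresponding argument, and the same $s$-with-$p_i$ recipe applies after dropping that input slot), as well as confirm that the form of Proposition~\ref{prop:symmetric} really delivers bottom rows with all-$a$ first block and $b$-covered second and third columns. The two ideas doing the work are (i) that joining the second \emph{and} third blocks forces the bad-join row onto the $r_4$-pattern, so no ``new combination through $s$'' is needed for the collapse itself, and (ii) that a single post-composition with $s$ and $p_2$ (respectively $p_1$) converts the residual value $b$ (respectively $a$) into the $c$ demanded by the statement.
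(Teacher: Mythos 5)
Your proposal is correct and follows essentially the paper's route: the paper's own proof is just the one-line remark that the argument of Lemma~\ref{lem:r4-asymmetric} goes through, and you have carried out exactly that argument twice, once on the rows mapping to $a$ (via join) and once, dually, on the rows mapping to $b$ (via meet). Your explicit non-join-closedness argument for $G$ and the final post-composition $s(g(\cdots),p_2(x_1,x_2))$ (resp.\ $p_1$) are faithful, slightly more detailed renderings of the paper's ``collapsing co-ordinates'' and ``new combination through $s$'' steps, and they check out against the required value tables.
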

\begin{proof}
The proof proceeds exactly as in Lemma~\ref{lem:r4-asymmetric}.
\end{proof}

An important special case  of the previous lemma, which is satisfied by Chen's $(\{a,b,c\};r,s)$ is as follows.

\vspace{0.2cm}
\noindent \textbf{Zhuk Condition}. $\mathbb{A}$ has idempotent term operations, binary $p$ and ternary operation $r_3$, so that either
\[
\begin{array}{c}
\left(
  \begin{array}{ccc}
    \begin{array}{ccc}
    aab & r_3 & a \\
    aba & \rightarrow & a \\
    abb & & c \\
    \end{array}
  & \ \ \ \ \ \  \mbox{\textbf{and}} \ \ \ \ \ \ &
    \begin{array}{ccc}
    ab & p & a \\
    ac & \rightarrow & c \\
    \end{array}
  \end{array}
\right)
\\
\mbox{\textbf{or}} \\
\left(
\begin{array}{ccc}
   \begin{array}{ccc}
   bab & r_3 & b \\
   bba & \rightarrow & b \\
   baa & & c \\
   \end{array}
& \ \ \ \ \ \  \mbox{\textbf{and}} \ \ \ \ \ \ &
    \begin{array}{ccc}
    ab & p & b \\
    cb & \rightarrow & c \\
    \end{array}
  \end{array}
\right)
\end{array}
\]

\section{About essential relations}

We assume that all relations are defined on the finite set $\{a,b,c\}$.
A relation $\rho$ is called \textit{essential} if
it cannot be represented as a conjunction of relations with smaller arities.
A tuple $(a_{1},a_{2},\ldots,a_{n})$ is called \textit{essential for a relation $\rho$}
if $(a_{1},a_{2},\ldots,a_{n})\notin\rho$ and
for every $i\in \{1,2,\ldots,n\}$ there exists $b\in A$ such that 
$(a_{1},\ldots,a_{i-1},b,a_{i+1},\ldots,a_{n})\in\rho.$ Let us define a relation $\tilde{\rho}$ for every relation $\rho \subseteq D^n$. Put $\sigma_i(x_1,\ldots,x_{i-1},x_{i+1},\ldots,x_{n}) := \exists y \ \rho(x_1,\ldots,x_i,y,x_{i+1},\ldots,x_n)$
and let 
\[ \tilde{\rho}(x_1,\ldots,x_n) := \sigma_1(x_2,x_3,\ldots,x_n) \wedge  \sigma_2(x_1,x_3,\ldots,x_n) \wedge \ldots \wedge  \sigma_1(x_1,x_2,\ldots,x_{n-1}). \]
\begin{lemma}\label{sushnabor}
A relation $\rho$ is essential iff there exists an essential tuple for $\rho$.
\end{lemma}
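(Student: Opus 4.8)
The plan is to reduce the lemma to a single set-inclusion by identifying essential tuples with the elements of $\tilde\rho \setminus \rho$. First I would record the easy fact that $\rho \subseteq \tilde\rho$ always holds: if $\overline{a}=(a_1,\dots,a_n)\in\rho$, then taking the witness $y=a_i$ shows that $\sigma_i$ holds at $(a_1,\dots,a_{i-1},a_{i+1},\dots,a_n)$ for every $i$, so $\overline{a}$ satisfies every conjunct of $\tilde\rho$. Next, unwinding the definition, a tuple $\overline{a}$ is essential for $\rho$ precisely when $\overline{a}\notin\rho$ and, for each $i$, the tuple with the $i$-th entry deleted satisfies $\sigma_i$ --- that is, precisely when $\overline{a}\in\tilde\rho\setminus\rho$. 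Hence ``there is an essential tuple for $\rho$'' is equivalent to ``$\rho\subsetneq\tilde\rho$'', and since $\rho\subseteq\tilde\rho$ always, to ``$\rho\neq\tilde\rho$''. So it is enough to prove that $\rho$ is essential if and only if $\rho\neq\tilde\rho$.

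One direction is immediate. Each $\sigma_i$ is a relation of arity $n-1$, and $\tilde\rho$ is by construction the conjunction $\sigma_1(x_2,\dots,x_n)\wedge\cdots\wedge\sigma_n(x_1,\dots,x_{n-1})$ of these. So if $\rho=\tilde\rho$ then $\rho$ is exhibited as a conjunction of relations of arity $<n$ and is therefore not essential; contrapositively, if $\rho$ is essential then $\rho\neq\tilde\rho$ and an essential tuple exists.

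For the converse I would show that if $\rho$ is \emph{not} essential then $\tilde\rho\subseteq\rho$, so $\rho=\tilde\rho$ and no essential tuple exists. Fix a decomposition $\rho(x_1,\dots,x_n)=\bigwedge_j \rho_j$ in which each $\rho_j$ is obtained from a relation of arity $<n$, say involving only the coordinates in a set $I_j\subsetneq\{1,\dots,n\}$. Let $\overline{a}\in\tilde\rho$; I claim $\overline{a}\in\rho_j$ for each $j$. Choose $i_j\notin I_j$. Since $\overline{a}$ satisfies the conjunct $\sigma_{i_j}$ of $\tilde\rho$, there is $b$ with $\overline{a}':=(a_1,\dots,a_{i_j-1},b,a_{i_j+1},\dots,a_n)\in\rho\subseteq\rho_j$; and since $\rho_j$ ignores coordinate $i_j$ while $\overline{a}'$ agrees with $\overline{a}$ on every other coordinate, $\overline{a}\in\rho_j$ as well. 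As $j$ was arbitrary, $\overline{a}\in\bigwedge_j\rho_j=\rho$, so $\tilde\rho\subseteq\rho$.

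I do not expect a serious obstacle here; the only point needing care is the bookkeeping in the last step --- making precise that a ``conjunction of relations of smaller arity'' means each conjunct genuinely involves fewer than $n$ of the $n$ coordinates (variable repetitions and permutations being harmless), and then playing a missing coordinate of each conjunct off against the corresponding $\sigma_i$. Everything else is a direct unwinding of definitions, and the small-arity cases are degenerate but cause no difficulty.
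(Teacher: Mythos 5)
Your proof is correct and follows essentially the same route as the paper's: identify essential tuples with the elements of $\tilde\rho\setminus\rho$ and show that non-essentiality forces $\rho=\tilde\rho$. The only difference is that you supply the details (in particular the argument that $\tilde\rho\subseteq\rho$ for a decomposable $\rho$, by playing a coordinate missing from each conjunct against the corresponding $\sigma_i$) which the paper's two-line proof leaves implicit.
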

\begin{proof}
(Forwards.) By contraposition, if $\rho$ is not essential, then $\tilde{\rho}$ is equivalent to $\rho$, and there can not be an essential tuple.

(Backwards.) An essential tuple witnesses that a relation is essential.
\end{proof}
\begin{lemma}
\label{lem:Dmitriy-micro}
Suppose $(c,c,x_3,\ldots,x_n)$ is an essential tuple for $\rho$. Then $\rho$ is not preserved by $s$.
\end{lemma}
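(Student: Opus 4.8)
The plan is to produce two tuples that lie in $\rho$ but whose coordinatewise image under $s$ is exactly the given essential tuple $(c,c,x_3,\ldots,x_n)$; since an essential tuple is by definition not in $\rho$, this contradicts $\rho$ being preserved by $s$.

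First I would unpack the definition of essential tuple at coordinates $1$ and $2$. By essentiality there is some $b_1\in D$ with $(b_1,c,x_3,\ldots,x_n)\in\rho$; since $(c,c,x_3,\ldots,x_n)\notin\rho$, we must have $b_1\neq c$, so $b_1\in\{a,b\}$. Symmetrically there is $b_2\in\{a,b\}$ with $(c,b_2,x_3,\ldots,x_n)\in\rho$. The key observation is simply that these witness entries cannot themselves equal $c$, as otherwise the essential tuple would already be a member of $\rho$.

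Next I would apply $s$ coordinatewise to the pair $(b_1,c,x_3,\ldots,x_n)$ and $(c,b_2,x_3,\ldots,x_n)$, both of which are in $\rho$. Recalling that $s(x,x)=x$ and $s(x,y)=c$ whenever $x\neq y$: in coordinate $1$ we get $s(b_1,c)=c$ because $b_1\neq c$; in coordinate $2$ we get $s(c,b_2)=c$ because $b_2\neq c$; and in each coordinate $j\geq 3$ we get $s(x_j,x_j)=x_j$. Thus $s$ sends this pair of tuples of $\rho$ to $(c,c,x_3,\ldots,x_n)$, which is not in $\rho$, so $\rho$ is not preserved by $s$.

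There is essentially no obstacle to overcome here: the argument is a one-line application of the definition of $s$ together with the remark that the two witnesses supplied by essentiality at the first two coordinates are forced to be distinct from $c$. The only thing worth stating carefully is that single remark, since it is precisely what makes $s$ output $c$ in both of those coordinates.
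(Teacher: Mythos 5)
Your proof is correct and follows exactly the paper's argument: take the two witnesses from essentiality at coordinates $1$ and $2$, note they cannot equal $c$, and apply $s$ coordinatewise to land on the excluded tuple. The paper states this more tersely, but your added remark that the witnesses are forced to differ from $c$ is precisely the detail the paper leaves implicit.
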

\begin{proof}
Since $(c,c,x_3,\ldots,x_n)$ is an essential tuple, $(x_1,c,x_3,\ldots,x_n)$ and $(c,x_2,x_3,$ $\ldots,x_n)$ are in $\rho$ for some $x_1$ and $x_2$. But applying $s$ now gives the contradiction.
\end{proof}
For a tuple $\mathbf{y}$, we denote its $i$th co-ordinate by $\mathbf{y}(i)$. For $n\geq 3$, we define the arity $n+1$ idempotent operation $f^a_n$ as follows
\[
\begin{array}{c}
f^a_n(a,a,\ldots,a,a)=a \\
f^a_n(b,b,\ldots,b,b)=b \\
f^a_n(b,a,\ldots,a,a)=a \\
f^a_n(a,b,\ldots,a,a)=a \\
\vdots \\
f^a_n(a,a,\ldots,b,a)=a \\
f^a_n(a,a,\ldots,a,b)=a \\
\mbox{else $c$}
\end{array}
\]
We define $f^b_n$ similarly with $a$ and $b$ swapped. These functions are very similar to partial near-unanimity functions.
\begin{lemma}
\label{lem:Dmitriy}
Suppose $\mathbb{A}$ is a Gap Algebra, that is not $\alpha\beta$-projective, so that $\mathbb{A}$ satisfies the Zhuk Condition. Then either 
\begin{itemize}
\item any relation $\rho \in \mathrm{Inv}(\mathbb{A})$ of arity $h<n+1$ is preserved by $f^a_n$, \textbf{or} 
\item any relation $\rho \in \mathrm{Inv}(\mathbb{A})$ of arity $h<n+1$ is preserved by $f^b_n$.
\end{itemize}
\end{lemma}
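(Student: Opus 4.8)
The plan is to analyze how the operations $f^a_n$ and $f^b_n$ can fail to preserve a low-arity relation $\rho \in \mathrm{Inv}(\mathbb{A})$, and to show that a single failure of each would, when combined via the Zhuk Condition operations $p$ and $r_3$, contradict the hypothesis that $\mathbb{A}$ is a Gap Algebra — specifically by building a relation with an essential tuple of the form $(c,c,x_3,\ldots)$, which by Lemma~\ref{lem:Dmitriy-micro} cannot be preserved by $s$. So first I would set up the dichotomy: the Zhuk Condition gives us either the ``$a$-side'' pair $(r_3,p)$ with $r_3$ sending $aab,aba \mapsto a$, $abb \mapsto c$ and $p$ sending $ab\mapsto a$, $ac\mapsto c$; or the dual ``$b$-side'' pair. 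I claim the $a$-side forces every low-arity $\rho \in \mathrm{Inv}(\mathbb{A})$ to be preserved by $f^b_n$, and the $b$-side forces preservation by $f^a_n$ (the asymmetry being that the $a$-side operations are the ones that ``push toward $c$'' on $b$-heavy inputs). By symmetry it suffices to treat one side, say the $a$-side, and show preservation by $f^b_n$.

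Next I would argue by contradiction: suppose some $\rho$ of arity $h \le n$ is not preserved by $f^b_n$. Then there are tuples $\mathbf{t}_0,\mathbf{t}_1,\ldots,\mathbf{t}_n \in \rho$ with $f^b_n(\mathbf{t}_0,\ldots,\mathbf{t}_n) =: \mathbf{t} \notin \rho$. Since $f^b_n$ is idempotent and $\rho$ is preserved by $s$ (it is a Gap Algebra), I would first use $s$ together with the two $\{a,c\}$- and $\{b,c\}$-subalgebras to reduce to the case where all the $\mathbf{t}_i$ lie in $\{a,b\}^h$: apply $s$ coordinatewise against a witnessing tuple to kill off any $c$'s, noting $f^b_n$ restricted to $\{a,b\}$ outputs $c$ only on the ``else'' cases. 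The key structural observation is then that $f^b_n$ on $\{a,b\}$-inputs behaves like a near-unanimity-flavoured operation: coordinate $j$ of the output is $b$ iff at least $n$ of the $n+1$ inputs $\mathbf{t}_i(j)$ are $b$; it is $a$ iff all inputs are $a$ or exactly one is $b$ and it is in a specific position; else $c$. I would extract from the failure a ``minimal bad witness'' (fewest columns, say, or fewest $\mathbf{t}_i$'s differing) and show that at some coordinate the output takes value $c$ while at every coordinate some $\mathbf{t}_i$ agrees with $\mathbf{t}$ — i.e. exhibit that $\mathbf{t}$ is an essential tuple for $\rho$ and that it has two coordinates equal to $c$, or one coordinate $c$ obtained from a column that $s$ would have forced differently. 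This is where $p$ and $r_3$ enter: using $r_3$ (which merges three $\{a,b\}$-patterns to $c$ exactly on $abb$) I can repeatedly fold the $n+1$ rows down to $3$ rows of the shape $(\ldots,a,\ldots),(\ldots,b,\ldots)$ etc., landing in the situation of Lemma~\ref{lem:Dmitriy-micro}: a relation in $\mathrm{Inv}(\mathbb{A})$ with an essential tuple having a $c$ in position where $s$-closure forbids it — contradiction, since $\rho$ is preserved by $s$ and hence so is everything primitive-positively definable from it.

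The main obstacle I expect is the bookkeeping in the folding step: showing precisely that when $f^b_n$ fails on $\rho$, one can pick the witnessing tuples so that applying $r_3$ to appropriate triples of rows (and $p$ to pairs) produces a tuple that is both essential for the resulting pp-definable relation and has the ``$(c,c,\ldots)$'' shape of Lemma~\ref{lem:Dmitriy-micro}. The subtlety is that $\rho$ has arity only $h \le n$, so the $n+1$ rows cannot be ``linearly independent'' as characteristic vectors of coordinate-singletons — a counting/pigeonhole argument should show that if $f^b_n$ fails then two of the singleton-witness columns must collide, and that collision is exactly what yields the double-$c$ essential tuple. I would also need to handle carefully the reduction to $\{a,b\}$-valued tuples, ensuring the use of $s$ does not destroy the failure; the point is that $f^b_n$ and $s$ ``commute well enough'' on columns containing a $c$, because any column with a $c$ is sent by $s$-with-a-suitable-partner to a constant, on which $f^b_n$ is idempotent. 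Modulo this combinatorial core, the rest is routine.
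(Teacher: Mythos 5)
Your proposal assembles the right ingredients (the near-unanimity shape of $f^a_n$/$f^b_n$, pigeonhole via $n+1>h$, essential tuples, Lemma~\ref{lem:Dmitriy-micro}, and the Zhuk operations $p,r_3$), but three load-bearing steps do not work as described. First, the reduction of the witnessing tuples to $\{a,b\}^h$ via $s$ is impossible: $s$ sends every off-diagonal pair to $c$, so applying it coordinatewise can only \emph{introduce} $c$'s, never remove them. The paper makes no such reduction; the subcase where the offending row contains a $c$ is handled head-on, by applying a binary operation ($p$ from the Zhuk Condition, or $p_1,p_2$ from Lemma~\ref{lem:fun}) to two suitable columns. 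Second, your route to essentiality of the bad tuple (a ``minimal bad witness'' over columns) is not the right minimality. In the paper, essentiality of $\gamma=f_n(\mathbf{y}_1,\dots,\mathbf{y}_{n+1})$ comes from an induction on the \emph{arity} $h$: each conjunct $\sigma_i$ of $\widetilde{\rho}$ has arity $h-1$ and is preserved by $f_n$ by the inductive hypothesis, so $\gamma\in\widetilde{\rho}\setminus\rho$, which is exactly essentiality. Without that induction you have no reason why $\gamma$ is essential, hence no access to Lemma~\ref{lem:Dmitriy-micro}.

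Third, your endgame aims at the wrong contradiction. Lemma~\ref{lem:Dmitriy-micro} is used only to conclude that the essential tuple $\gamma$ contains \emph{exactly one} $c$; the contradiction is then that this single-$c$ tuple is \emph{generated inside $\rho$} by applying the binary operations (when the bad row contains a $c$, or contains both $ba$ and $ab$ patterns) or $r_3$ (the $abb\mapsto c$ pattern) to the columns $\mathbf{y}_k,\mathbf{y}_m,\mathbf{y}_l$ singled out by the pigeonhole, forcing $\gamma\in\rho$. There is no ``double-$c$ essential tuple'' to be manufactured, and your pigeonhole-collision heuristic for producing one has no visible mechanism. A minor but symptomatic slip: the pairing of regimes is reversed --- the first regime ($abb\mapsto c$, $ac\mapsto c$) is the one whose operations reproduce the $c$-outputs of $f^a_n$ (which sends $b$-heavy rows to $c$), not $f^b_n$; this is harmless for the disjunctive conclusion but would derail the detailed verification you defer.
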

\begin{proof}
Suppose \mbox{w.l.o.g.} that the Zhuk Condition is in the first regime and has idempotent term operations, binary $p$ and ternary operation $r_3$, so that
\[
  \begin{array}{ccc}
    \begin{array}{ccc}
    aab & r_3 & a \\
    aba & \rightarrow & a \\
    abb & & c \\
    \end{array}
  & \ \ \ \ \ \  \mbox{\textbf{and}} \ \ \ \ \ \ &
    \begin{array}{ccc}
    ab & p & a \\
    ac & \rightarrow & c \\
    \end{array}
  \end{array}
\]
We prove this statement for a fixed $n$ by induction on $h$. For $h = 1$ we just need to
check that $f_n:=f^a_n$ preserves the unary relations $\{a, c\}$ and $\{b, c\}$.

Assume that $\rho$ is not preserved by $f_n$, then there exist tuples $\mathbf{y}_1,\ldots,\mathbf{y}_{n+1} \in \rho$ such that $f_n(\mathbf{y}_1,\ldots,\mathbf{y}_{n+1})=\gamma \notin \rho$. We consider a matrix whose columns are $\mathbf{y}_1,\ldots,\mathbf{y}_{n+1}$. Let the rows of this matrix be $\mathbf{x}_1,\ldots,\mathbf{x}_h$.

By the inductive assumption every $\sigma_i$ from the definition of $\widetilde{\rho}$ is preserved by $f_n$, which means that $\widetilde{\rho}$ is preserved by $f_n$, which means that $\gamma \notin \rho$ and $\gamma$ is an essential tuple for $\rho$.

We consider two cases. First, assume that $\gamma$ doesn't contain $c$. Then it follows from the definition that every $\mathbf{x}_i$ contains at most one element that differs from $\gamma(i)$. Since $n+1>h$, there exists $i \in \{1, 2, \ldots , n + 1\}$ such that $\mathbf{y}_i = \gamma$. This contradicts the fact that $\gamma \notin \rho$.

Second, assume that $\gamma$ contains $c$. Then by Lemma~\ref{lem:Dmitriy-micro}, $\gamma$ contains exactly one $c$. \mbox{W.l.o.g.} we assume that $\gamma(1) = c$. It follows from the definition of $f_n$ that $\mathbf{x}_i$ contains at most one element that differs from $\gamma(i)$ for every $i \in \{2, 3, \ldots , h\}$. Hence, since $n+1>h$, for some $k \in \{1, 2, \ldots , n+ 1\}$ we have $\mathbf{y}_k(i) = \gamma(i)$ for every $i \in \{2, 3, \ldots , h\}$. Since $f_n(\mathbf{x}_1) = c$, we have one of three subcases. First subcase, $\mathbf{x}_1(j) = c$ for some $j$. We need one of the properties
\[
\begin{array}{cc|c}
\mathbf{y}_k & \mathbf{y}_j & \gamma \\
\hline
a & c & c \\
a & b & a \\
\end{array}
\mbox{ \ \ \ \ \ \ \ \ \ \ \ \ \ \ \ \ \ \ }
\begin{array}{cc|c}
\mathbf{y}_k & \mathbf{y}_j & \gamma \\
\hline
b & c & c \\
a & b & a \\
\end{array}
\]
and we can see that the functions from Lemma~\ref{lem:fun} or the definition of the Zhuk Condition suffice, which contradicts our assumptions.

Second subcase, $\mathbf{y}_k(1) = b, \mathbf{y}_m(1) = a$ for some $m \in \{1, 2, \ldots , n + 1\}$. We need the property 
\[
\begin{array}{cc|c}
\mathbf{y}_k & \mathbf{y}_m & \gamma \\
\hline
b & a & c \\
a & b & a \\
\end{array}
\]
can check that a function from Lemma~\ref{lem:fun} suffices, which contradicts our assumptions.

Third subcase, $\mathbf{y}_k(1) = a, \mathbf{y}_m(1) = b$ and $\mathbf{y}_l(1) = b$ for $m, l \in \{1, 2, \ldots , n + 1\}\setminus \{k\}$, $m \neq l$. We need the property
\[
\begin{array}{ccc|c}
\mathbf{y}_k & \mathbf{y}_m & \mathbf{y}_l & \gamma \\
\hline
a & b & b & c \\
a & a & b & a \\
a & b & a & a \\
\end{array}
\]
and we can check that the $r_3$ from the Zhuk Condition suffices, which contradicts our assumptions. This completes the proof.
\end{proof}
\begin{corollary}
\label{cor:Dmitriy}
Suppose $\mathbb{A}$ is a Gap Algebra, that is not $\alpha\beta$-projective so that $\mathbb{A}$ satisfies the Zhuk Condition. Then, for every finite subset of $\Delta$ of Inv$(\mathbb{A})$, Pol$(\Delta)$ is collapsible.
\end{corollary}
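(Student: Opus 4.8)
The plan is to combine Lemma~\ref{lem:Dmitriy} with the observation that, for a finite $\Delta \subseteq \mathrm{Inv}(\mathbb{A})$, only finitely many arities occur, so that one of the near-unanimity-like operations $f^a_n$ or $f^b_n$ (for $n$ chosen large enough) preserves every relation in $\Delta$ and hence lies in $\mathrm{Pol}(\Delta)$. Concretely, first I would fix $\Delta$ and set $N := \max\{\arity(\rho) : \rho \in \Delta\}$, then apply Lemma~\ref{lem:Dmitriy} with $n := N$; it yields (w.l.o.g.\ the first alternative) that every relation of arity $h < n+1 = N+1$ is preserved by $f^a_N$. Since every $\rho \in \Delta$ has arity $h \le N < N+1$, we get $f^a_N \in \mathrm{Pol}(\Delta)$.

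The second and main step is to argue that an algebra having $f^a_N$ (together with $s$, which is always present) among its term operations is collapsible — indeed $N$-collapsible. This is exactly the ``partial near-unanimity'' flavour of $f^a_N$: the identities defining $f^a_N$ say that $f^a_N$ takes the value of the repeated coordinate except possibly when the non-repeated coordinates together with one anomalous position force the value $c$. I would show that $f^a_N$ is a generalised Hubie-pol — in fact on each single $b$-in-position-$i$ word it returns $a$, so on words over $\{a,b\}$ it behaves like a near-unanimity operation with target $a$ — and combine this with $s$ exactly as in the bootstrapping arguments of Proposition~\ref{prop:asymmetric} (Cases 1, 2a, 2b): from adversaries of the form $(D^{N-1},\{x\}^{M-N+1})$ for $x\in\{a,b,c\}$ one builds, coordinate block by coordinate block, the full adversary $(D,\ldots,D)$. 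The value $a$ on ``one $b$'' words handles the $\{a,b\}$ part; applying $s$ upgrades $\{a,b\}$ to $D$ and folds in the $c$ case; the only subtlety is the value $c$ on words with two or more $b$'s, but this is harmless because collapsing never needs to leave the fibre over a fixed source element. This gives $N$-collapsibility of $\mathrm{Pol}(\Delta)$ from source $D$.

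Finally I would note that the hypotheses of Lemma~\ref{lem:Dmitriy} — Gap Algebra, not $\alpha\beta$-projective, satisfying the Zhuk Condition — are exactly the standing assumptions plus the Zhuk Condition, and that the Zhuk Condition is precisely what Lemma~\ref{lem:r4-asymmetric} and Lemma~\ref{lem:r4-symmetric} (via Propositions~\ref{prop:asymmetric} and~\ref{prop:symmetric}, and the remark that the $r_4$/$r_3$ data is an instance of the Zhuk Condition satisfied by $(\{a,b,c\};r,s)$) deliver in the remaining non-collapsible case: in each of the asymmetric and symmetric subcases, $f$ either directly generates a collapsibility-witnessing binary operation (and the algebra is collapsible outright) or it generates the Zhuk Condition operations. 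So any $\mathbb{A}$ as in the global assumptions either is already collapsible — in which case $\mathrm{Pol}(\Delta)\supseteq$ terms of $\mathbb{A}$ is collapsible a fortiori, since more operations can only help composability — or satisfies the Zhuk Condition and the argument above applies.

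The main obstacle I anticipate is the bookkeeping in step two: verifying carefully that the defining identities of $f^a_N$ really do license the block-by-block adversary construction, in particular checking the interaction with $s$ so that coordinates where $f^a_N$ would output $c$ (two or more $b$'s over $\{a,b\}$, or any genuine appearance of $c$) do not obstruct reaching $(D,\ldots,D)$. Everything else is either a direct appeal to Lemma~\ref{lem:Dmitriy} or the pigeonhole remark that finitely many relations have bounded arity.
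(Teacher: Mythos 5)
Your proposal is correct and follows essentially the same route as the paper: fix $N$ as the maximum arity occurring in $\Delta$, apply Lemma~\ref{lem:Dmitriy} with $n:=N$ to place $f^a_N$ (or $f^b_N$) in $\mathrm{Pol}(\Delta)$, and conclude collapsibility. The only difference is that your second step is much heavier than it needs to be: the paper simply observes that $f^a_n$ is a Hubie-pol in $\{b\}$ (and $f^b_n$ in $\{a\}$) --- since fixing any one coordinate to $b$ and letting the rest range over $D$ already yields all of $D$ --- so the fact recorded in the preliminaries gives $n$-collapsibility from the singleton source $\{b\}$ outright, with no need for the block-by-block adversary bootstrapping or the worry about the value $c$ on words with two or more $b$'s.
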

\begin{proof}
$f^a_n$ is a Hubie-pol in $\{b\}$ and $f^b_n$ is a Hubie-pol in $\{a\}$.
\end{proof}
For $n\geq 2$, we define the arity $n+2$ idempotent operation $\widehat{f}^a_n$ as follows
\[
\begin{array}{c}
f^a_n(a,a,a,\ldots,a,a)=a \\
f^a_n(b,b,b,\ldots,b,b)=b \\
f^a_n(b,b,a,\ldots,a,a)=a \\
f^a_n(b,a,b,\ldots,a,a)=a \\
\vdots \\
f^a_n(b,a,a,\ldots,b,a)=a \\
f^a_n(b,a,a,\ldots,a,b)=a \\
\mbox{else $c$}
\end{array}
\]
We define $\widehat{f}^b_n$ similarly with $a$ and $b$ swapped.
\begin{lemma}
\label{lem:Dmitriy-long}
Suppose $\mathbb{A}$ is a Gap Algebra that is not $\alpha\beta$-projective. Then either 
\begin{itemize}
\item any relation $\rho \in \mathrm{Inv}(\mathbb{A})$ of arity $h<n+2$ is preserved by $\widehat{f}^a_n$, \textbf{or} 
\item any relation $\rho \in \mathrm{Inv}(\mathbb{A})$ of arity $h<n+2$ is preserved by $\widehat{f}^b_n$.
\end{itemize}
\end{lemma}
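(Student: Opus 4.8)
The plan is to mimic the structure of the proof of Lemma~\ref{lem:Dmitriy}, but now without assuming the Zhuk Condition, instead exploiting directly that $\mathbb{A}$ is not $\alpha\beta$-projective together with the classification work of Propositions~\ref{prop:asymmetric} and~\ref{prop:symmetric} and Lemmas~\ref{lem:r4-asymmetric} and~\ref{lem:r4-symmetric}. First I would fix an operation $f \in \mathbb{A}$ witnessing the failure of $\{a,c\}\{b,c\}$-projectivity and, using the global case analysis set up before Lemma~\ref{lem:fun}, split into the asymmetric case (where w.l.o.g.\ Class $(i)$ is a singleton) and the symmetric case. In each case, Proposition~\ref{prop:asymmetric} (resp.\ Proposition~\ref{prop:symmetric}) tells us that either a suitable binary term ($ab\mapsto a$, $ac\mapsto c$, or its dual) is generated, or the algebra is already collapsible; in the latter situation the conclusion is immediate since collapsible algebras are closed under taking $\mathrm{Pol}(\Delta)$ for finite $\Delta$ (collapsibility of the whole algebra gives a uniform $k$). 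If the binary term $p$ with $ab\mapsto a$, $ac\mapsto c$ is available (the asymmetric case, w.l.o.g.), Lemma~\ref{lem:r4-asymmetric} supplies a $4$-ary $r_4$ of the stated shape, and one checks $(D;p,r_4,s)$ already satisfies (or generates) a ternary operation behaving like the $r_3$ of the Zhuk Condition — this is essentially what Lemma~\ref{lem:r4-asymmetric} gives after one collapse — so we are reduced to the hypotheses of Lemma~\ref{lem:Dmitriy}.

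The technical heart is therefore the induction on arity $h$ exactly as in Lemma~\ref{lem:Dmitriy}, but carried out for $\widehat{f}^a_n$ (arity $n+2$, with a fixed leading coordinate that is forced to $b$ on all the ``off-diagonal'' defining tuples). I would set up the matrix whose columns are the $n+2$ tuples $\mathbf{y}_1,\ldots,\mathbf{y}_{n+2}\in\rho$ mapped by $\widehat{f}^a_n$ to some $\gamma\notin\rho$, use the inductive hypothesis to conclude $\gamma$ is an essential tuple for $\rho$, and split on whether $\gamma$ contains $c$. The $c$-free case is pure pigeonhole: each row $\mathbf{x}_i$ differs from $\gamma(i)$ in at most one entry among columns $2,\ldots,n+2$ (and the first column is constrained too), so with $n+2>h$ some column equals $\gamma$, contradiction. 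When $\gamma$ contains $c$, Lemma~\ref{lem:Dmitriy-micro} forces exactly one $c$, say $\gamma(1)=c$; then for the non-first rows we again get the near-unanimity constraint, so some column $\mathbf{y}_k$ agrees with $\gamma$ off the first coordinate, and the three subcases (a $c$ appears in row~$1$; a $b$ and an $a$ both appear in the first coordinate among columns other than $k$; a single $a$ and two $b$s) reduce to the same small verifications as in Lemma~\ref{lem:Dmitriy}, discharged by the binary terms from Lemma~\ref{lem:fun}/the Zhuk-style $r_3$. The only genuinely new bookkeeping relative to Lemma~\ref{lem:Dmitriy} is the extra leading coordinate of $\widehat{f}^a_n$: one must check that, on the defining tuples, fixing that coordinate to $b$ (and $c$ elsewhere) does not create a fresh obstruction — but since that coordinate is a ``dummy-like'' coordinate on the diagonal tuples and is $b$ on all the others, the same row-by-row argument goes through, and indeed $\widehat{f}^a_n$ is obtained from $f^a_{n+1}$ by a coordinate identification, so Lemma~\ref{lem:Dmitriy} can almost be invoked as a black box.

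I expect the main obstacle to be exactly this reduction step: verifying cleanly that in the asymmetric/symmetric non-collapsible cases one really does land inside the hypotheses of Lemma~\ref{lem:Dmitriy} (or a mild variant of it), i.e.\ that the binary terms of Lemma~\ref{lem:fun} together with the $r_4$'s of Lemmas~\ref{lem:r4-asymmetric} and~\ref{lem:r4-symmetric} generate enough to run the three-subcase argument — in the symmetric case one has to be careful to pick the correct one of the two dual regimes (the one matching whichever of $r^a_4,r^b_4$ is available) and to confirm that the ``or $f$ generates a binary operation'' escape clauses of the Propositions do not themselves lead back to a collapsible algebra only via an unbounded parameter. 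Once that reduction is in place, the proof of Lemma~\ref{lem:Dmitriy-long} is word-for-word the proof of Lemma~\ref{lem:Dmitriy} with $n$ replaced by $n+1$ and one coordinate identified, so I would write it as ``the proof proceeds exactly as in Lemma~\ref{lem:Dmitriy}'' after dispatching the reduction.
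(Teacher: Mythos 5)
Your inductive skeleton (matrix of columns $\mathbf{y}_1,\ldots,\mathbf{y}_{n+2}$, reduction to an essential tuple via $\widetilde{\rho}$, pigeonhole when $\gamma$ is $c$-free, exactly one $c$ otherwise, then subcases) does match the paper's. But the reduction you lean on --- that $(D;p,r_4,s)$ ``generates a ternary operation behaving like the $r_3$ of the Zhuk Condition \ldots\ after one collapse'', so that Lemma~\ref{lem:Dmitriy} can be invoked essentially as a black box --- is exactly the step that fails, and it is the reason Lemma~\ref{lem:Dmitriy-long} and the modified operation $\widehat{f}^a_n$ exist at all. The $r_4$ of Lemmas~\ref{lem:r4-asymmetric} and~\ref{lem:r4-symmetric} has defining rows $abab$, $abba$, $abbb$: its second argument is $b$ on all three rows, but no other coordinate is, so no identification of coordinates (nor the constant $b$, since $\{b\}$ is not a subalgebra) lets you freeze that argument and extract the ternary $aab\mapsto a$, $aba\mapsto a$, $abb\mapsto c$ of the Zhuk Condition. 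If that reduction worked in general, Lemma~\ref{lem:Dmitriy-long} would be redundant and Lemma~\ref{lem:Dmitriy} would already cover all non-$\alpha\beta$-projective Gap Algebras.

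Relatedly, $\widehat{f}^a_n$ is not ``$f^a_{n+1}$ with one coordinate identified'': both have arity $n+2$ but different supports --- $\widehat{f}^a_n$ sends to $a$ precisely the tuples with first coordinate $b$ and exactly one further $b$, which $f^a_{n+1}$ sends to $c$. That first coordinate is not dummy-like; it is engineered so that, in the third subcase of the induction, the column $\mathbf{y}_1$ supplies the extra ($b$-valued) argument that the $4$-ary $r_4$ needs in place of the missing ternary $r_3$. Even then the paper must split on the value of $\mathbf{y}_1(1)$: when $\mathbf{y}_1(1)=b$ a coordinate-permuted $r_4$ does the job, but when $\mathbf{y}_1(1)=a$ one needs the ad hoc composite $p(x_2,p_1(x_4,p_1(x_2,x_1)))$ built from the $p_1$ of Lemma~\ref{lem:fun} and the binary $p$ from Propositions~\ref{prop:asymmetric}/\ref{prop:symmetric}. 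Your proposal contains neither the role of the distinguished coordinate nor this extra subsubcase, so the genuinely new content of the lemma relative to Lemma~\ref{lem:Dmitriy} is missing.
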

\begin{proof}
Suppose \mbox{w.l.o.g.} that we are either in the asymmetric case with Class $(i)$ singleton and  there exists $\overline{z} \in \{a,b\}^*$ so that $f(a|b,\ldots,b|\overline{z})=b$ \textbf{or} we are in the symmetric case and we have an idempotent term operation $p$ mapping $ab \mapsto a$ and $ac \mapsto c$.

We prove this statement for a fixed $n$ by induction on $h$. For $h = 1$ we just need to
check that $\widehat{f}_n:=\widehat{f}^a_n$ preserves the unary relations $\{a, c\}$ and $\{b, c\}$.

Assume that $\rho$ is not preserved by $f_n$, then there exist tuples $\mathbf{y}_1,\ldots,\mathbf{y}_{n+2} \in \rho$ such that $\widehat{f}_n(\mathbf{y}_1,\ldots,\mathbf{y}_{n+2})=\gamma \notin \rho$. We consider a matrix whose columns are $\mathbf{y}_1,\ldots,\mathbf{y}_{n+2}$. Let the rows of this matrix be $\mathbf{x}_1,\ldots,\mathbf{x}_h$.

By the inductive assumption every $\sigma_i$ from the definition of $\widetilde{\rho}$ is preserved by $\widehat{f}_n$, which means that $\widetilde{\rho}$ is preserved by $\widehat{f}_n$, which means that $\gamma \notin \rho$ and $\gamma$ is an essential tuple for $\rho$.

We consider two cases. First, assume that $\gamma$ doesn’t contain $c$. Then it follows from the definition that every $\mathbf{x}_i$ contains at most one element that differs from $\gamma(i)$. Since $n+2>h$, there exists $i \in \{1, 2, \ldots , n + 1\}$ such that $\mathbf{y}_i = \gamma$. This contradicts the fact that $\gamma \notin \rho$.

Second, assume that $\gamma$ contains $c$. Then by Lemma~\ref{lem:Dmitriy-micro}, $\gamma$ contains exactly one $c$. \mbox{W.l.o.g.} we assume that $\gamma(1) = c$. It follows from the definition of $\widehat{f}_n$ that $\mathbf{x}_i$ contains at most one element that differs from $\gamma(i)$ for every $i \in \{2, 3, \ldots , h\}$. Hence, since $n+2>h$, for some $k \in \{2, \ldots , n+2\}$ we have $\mathbf{y}_k(i) = \gamma(i)$ for every $i \in \{2, 3, \ldots , h\}$. Since $\widehat{f}_n(\mathbf{x}_1) = c$, we have one of four subcases. First subcase, $\mathbf{x}_1(j) = c$ for some $j$. We need one of the properties
\[
\begin{array}{cc|c}
\mathbf{y}_k & \mathbf{y}_j & \gamma \\
\hline
a & c & c \\
a & b & a \\
\end{array}
\mbox{ \ \ \ \ \ \ \ \ \ \ \ \ \ \ \ \ \ \ }
\begin{array}{cc|c}
\mathbf{y}_k & \mathbf{y}_j & \gamma \\
\hline
b & c & c \\
a & b & a \\
\end{array}
\]
and we can see that the functions from Lemma~\ref{lem:fun}, or Proposition~\ref{prop:asymmetric} or Proposition~\ref{prop:symmetric}, suffice which contradicts our assumptions.

Second subcase, $\mathbf{y}_k(1) = b, \mathbf{y}_m(1) = a$ for some $m \in \{1, 2, \ldots , n + 1\}$. We need the property 
\[
\begin{array}{cc|c}
\mathbf{y}_k & \mathbf{y}_m & \gamma \\
\hline
b & a & c \\
a & b & a \\
\end{array}
\]
can check that a function from Lemma~\ref{lem:fun} suffices, which contradicts our assumptions.

For Case 3, $\mathbf{y}_k(1) = a, \mathbf{y}_m(1) = b$ and $\mathbf{y}_l(1) = b$ for some $m, l \in \{1, 2, \ldots , n + 1\}\setminus \{k\}$, $m \neq l$ (possibly $1 \in \{m,l\}$). We now split into two subsubcases: either $\mathbf{y}_1(1)=b$ and we need the property
\[
\begin{array}{cccc|c}
\mathbf{y}_1 &  \mathbf{y}_k & \mathbf{y}_m & \mathbf{y}_l & \gamma \\
\hline
b & a & b & b & c \\
b & a & a & b & a \\
b & a & b & a & a. \\
\end{array}
\]
Here we can check that $r_4$, from Proposition~\ref{prop:asymmetric} or Proposition~\ref{prop:symmetric}, with co-ordinates $1$ and $2$ permuted, suffices, which contradicts our assumptions. Or we have  $\mathbf{y}_1(1)=a$ and we need the property
\[
\begin{array}{cccc|c}
\mathbf{y}_1 & \mathbf{y}_k & \mathbf{y}_m & \mathbf{y}_l & \gamma \\
\hline
a & a & b & b & c \\
b & a & a & b & a \\
b & a & b & a & a. \\
\end{array}
\]
For this $p(x_2,(p_1(x_4,p_1(x_2,x_1))))$ suffices where $p_1$ comes from Lemma~\ref{lem:fun} and $p$ is as before in this proof (\mbox{cf.} Proposition~\ref{prop:asymmetric} and Proposition~\ref{prop:symmetric}).
\[
\begin{array}{cccc|c|c|c}
x_1 & x_2 & x_3 & x_4 & p_1(x_2,x_1) & p_1(x_4,p_1(x_2,x_1)) & p(x_2,(p_1(x_4,p_1(x_2,x_1)))) \\
\hline
a & a & b & b & a & c & c \\
b & a & a & b & b & b & a\\
b & a & b & a & b & b & a \\
\end{array}
\]
This completes the proof.
\end{proof}
\begin{corollary}
\label{cor:Dmitriy-long}
Suppose $\mathbb{A}$ is a Gap Algebra that is not $\alpha\beta$-projective. Then, for every finite subset of $\Delta$ of Inv$(\mathbb{A})$, Pol$(\Delta)$ is collapsible.
\end{corollary}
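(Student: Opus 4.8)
The plan is to prove this exactly the way Corollary~\ref{cor:Dmitriy} is proved from Lemma~\ref{lem:Dmitriy}, but now drawing on Lemma~\ref{lem:Dmitriy-long} in place of Lemma~\ref{lem:Dmitriy}: for a finite $\Delta$ one picks $n$ large enough that Lemma~\ref{lem:Dmitriy-long} places one of $\widehat{f}^a_n,\widehat{f}^b_n$ into $\mathrm{Pol}(\Delta)$, and then one observes that each of these operations is a Hubie-pol, so that its mere presence already forces collapsibility by the remark of Section~2 (originating in~\cite{hubie-sicomp}). Before the main argument I would dispose of the trivial possibility: if $\mathbb{A}$ is itself collapsible then we are done at once, since $\Delta\subseteq\mathrm{Inv}(\mathbb{A})$ forces $\mathrm{Pol}(\Delta)$ to contain every term operation of $\mathbb{A}$, and the terms witnessing collapsibility of $\mathbb{A}$ witness it for $\mathrm{Pol}(\Delta)$ too, with the same source and parameter. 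In the remaining case the structural analysis of Section~3 --- the global set-up together with Lemma~\ref{lem:fun}, Proposition~\ref{prop:asymmetric} and Proposition~\ref{prop:symmetric} --- supplies exactly the data ($p_1$, $p_2$, the relevant binary idempotent operation, and placement in either the asymmetric regime with Class~$(i)$ singleton and some $\overline{z}$ over $\{a,b\}$ with $f(a|b,\ldots,b|\overline{z})=b$, or the symmetric regime with a binary idempotent $p$ sending $ab\mapsto a$ and $ac\mapsto c$) on which the proof of Lemma~\ref{lem:Dmitriy-long} runs, up to the $a\leftrightarrow b$ duality.

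Concretely, assuming $\mathbb{A}$ is not collapsible, I would fix a finite $\Delta\subseteq\mathrm{Inv}(\mathbb{A})$ (w.l.o.g.\ containing the three one-element unary relations, which lie in $\mathrm{Inv}(\mathbb{A})$ by idempotency and make $\mathrm{Pol}(\Delta)$ idempotent), pick $n\geq 2$ large enough that every $\rho\in\Delta$ has arity $h<n+2$, and invoke Lemma~\ref{lem:Dmitriy-long}: either every such $\rho$ is preserved by $\widehat{f}^a_n$ or every such $\rho$ is preserved by $\widehat{f}^b_n$; say w.l.o.g.\ $\widehat{f}^a_n\in\mathrm{Pol}(\Delta)$. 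I would then check that the $(n+2)$-ary operation $\widehat{f}^a_n$ is a Hubie-pol in $\{b\}$: fixing coordinate $1$ to $b$ one reaches $a$ via $(b,a,\ldots,a,b)$, $b$ via $(b,\ldots,b)$, and $c$ via $(b,b,b,a,\ldots,a)$, while fixing any coordinate $i\geq 2$ to $b$ one reaches $a$ via the tuple with $b$ in positions $1$ and $i$ and $a$ elsewhere, $b$ via $(b,\ldots,b)$, and $c$ via the tuple with $a$ in position $1$, $b$ in position $i$, and $a$ elsewhere. Hence, by the Section~2 remark, $\mathrm{Pol}(\Delta)$ is $(n+1)$-collapsible from source $\{b\}$; the dual computation for $\widehat{f}^b_n$ gives $(n+1)$-collapsibility from source $\{a\}$. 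Either way $\mathrm{Pol}(\Delta)$ is collapsible, as required.

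I expect no real obstacle once Lemma~\ref{lem:Dmitriy-long} is in hand; the points needing attention are only the index bookkeeping that guarantees $n+2$ strictly exceeds every arity occurring in $\Delta$, and the purely mechanical verification above that $\widehat{f}^a_n$ and $\widehat{f}^b_n$ are genuine Hubie-pols rather than merely generalised ones (contrast Chen's $r$). It is worth recording that the parameter $n+1$ obtained here grows with $\max\{\mathrm{arity}(\rho):\rho\in\Delta\}$, so nothing in this argument makes $\mathbb{A}$ itself collapsible --- and indeed it cannot, witness $(D;r,s)$ --- which is precisely the sense, anticipated in the introduction, in which switchability appears as a bounded-$k$ limit of an unbounded family of collapsibility statements.
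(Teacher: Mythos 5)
Your proposal is correct and follows the paper's own (one-line) proof exactly: invoke Lemma~\ref{lem:Dmitriy-long} for $n$ exceeding all arities in $\Delta$ and observe that $\widehat{f}^a_n$ (resp.\ $\widehat{f}^b_n$) is a Hubie-pol in $\{b\}$ (resp.\ $\{a\}$), hence forces collapsibility of $\mathrm{Pol}(\Delta)$. The extra care you take --- handling the already-collapsible case separately, adding singleton unary relations to force idempotency, and explicitly verifying the Hubie-pol property of $\widehat{f}^a_n$ --- is all sound and merely fills in details the paper leaves implicit.
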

\begin{proof}
$\widehat{f}^a_n$ is a Hubie-pol in $\{b\}$ and $\widehat{f}^b_n$ is a Hubie-pol in $\{a\}$.
\end{proof}

\section{Collapsibility}
Let $t$ be the $4$-ary operation that maps
\[
\begin{array}{ccc}
abab & & b \\
abba &  t & a \\
cbbc & \mapsto & b \\
acac & & a \\
\mathit{else} & & c. 
\end{array}
\]
\begin{lemma}
The algebra $(D;s,t)$ is $7$-collapsible from source $\{a,b\}$.
\end{lemma}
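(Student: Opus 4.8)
The plan is to build the full adversary $(D,\ldots,D)$ of width $m$ in two short stages: an iterated application of $s$, followed by a single application of $t$. We may assume $m\ge 8$, since for $m\le 7$ the adversary $(D,\ldots,D)$ already lies in $\Sigma^D_{m,7,a}$. Recall that the available generating adversaries are exactly the co-ordinate permutations of $(D^7,\{a\}^{m-7})$ and of $(D^7,\{b\}^{m-7})$.

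\emph{Stage 1 (using $s$).} The operation $s$ is a semilattice operation -- commutative, associative, idempotent, with $s(a,b)=s(a,c)=s(b,c)=c$ -- so for subsets $X_1,\ldots,X_r\subseteq D$ of which at least one is not a singleton we have $s(X_1,\ldots,X_r)=(X_1\cap\cdots\cap X_r)\cup\{c\}$. I would choose $r:=\lceil m/7\rceil$ generating adversaries $\mathbf A^1,\ldots,\mathbf A^r$ of the form $(D^7,\{a\}^{m-7})$ whose $7$-element sets of $D$-valued co-ordinates together cover $[m]$, and set $\mathbf C_a(j):=s(\mathbf A^1(j),\ldots,\mathbf A^r(j))$ for $j\in[m]$. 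At each $j$ the arguments all lie in $\{\{a\},D\}$ with at least one equal to $D$; since $a$ lies in all of them while $D$ is not a singleton, $\mathbf C_a(j)\supseteq\{a,c\}$. Dually, from $b$-generators covering $[m]$ I would form $\mathbf C_b$ with $\mathbf C_b(j)\supseteq\{b,c\}$ for every $j$.

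\emph{Stage 2 (using $t$).} Let $f:=t\bigl(s(\vec x),s(\vec y),s(\vec x),s(\vec y)\bigr)$, viewed as a term of $(D;s,t)$ in $2r$ variables, and instantiate its first $r$ variables by $\mathbf A^1,\ldots,\mathbf A^r$ and its last $r$ by the chosen $b$-generators. Then the $j$th entry of the resulting adversary is $\{t(u,v,u,v):u\in\mathbf C_a(j),\ v\in\mathbf C_b(j)\}$, and since $\{a,c\}\subseteq\mathbf C_a(j)$ and $\{b,c\}\subseteq\mathbf C_b(j)$ this set contains $t(a,b,a,b)=b$, $t(a,c,a,c)=a$ and $t(c,c,c,c)=c$, hence equals $D$. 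Thus $(D,\ldots,D)$ is $f$-composable from the given generators (with $f$ of arity $2\lceil m/7\rceil$; using independent copies $\mathbf C_a,\mathbf C_a',\mathbf C_b,\mathbf C_b'$ works equally well, at twice the arity), which establishes $7$-collapsibility from $\{a,b\}$.

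I do not anticipate a genuine obstacle here: the one thing to notice is that $s$, being a semilattice, already produces adversaries whose every co-ordinate contains $\{a,c\}$ (respectively $\{b,c\}$), and that the ``diagonal-shifted'' rows $abab\mapsto b$ and $acac\mapsto a$ of $t$ are precisely what is needed to amalgamate such an $\{a,c\}$-adversary with a $\{b,c\}$-adversary into the full adversary. The same computation also indicates why a two-element source is essential: from the singleton $\{a\}$ one only reaches adversaries valued in $\{a,c\}$, and $t$ maps $\{a,c\}$-selections into $\{a,c\}$, so $b$ can never be recovered -- which is presumably the content of the accompanying non-collapsibility claim.
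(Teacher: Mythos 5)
Your proof is correct, but it is structured quite differently from the paper's. The paper's proof is a one-liner: the $8$-ary term $t(s(x_1,x'_1),\ldots,s(x_4,x'_4))$ is a generalised Hubie-pol on the two patterns $aabbaabb\mapsto b$ and $aabbbbaa\mapsto a$, and one then invokes the adversary-growing machinery used throughout Section 3 (as in Cases 1 and 2 of Proposition~\ref{prop:asymmetric}): from permutations of $(\{a,b\}^{r},\{a\}^{m-r})$ and $(\{a,b\}^{r},\{b\}^{m-r})$ one builds the same shapes with $r+1$ in place of $r$, iterating up from the generators at $r=7$ and finishing with $s$ to recover $c$; the parameter $7$ arises as (arity $-\,1$). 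Your construction is instead flat and one-shot: a single term of arity $2\lceil m/7\rceil$, namely iterated $s$ on each half followed by one application of $t$ with its first and third (resp.\ second and fourth) arguments identified. The key difference is which rows of $t$ carry the weight: the paper exploits $abab\mapsto b$ together with $abba\mapsto a$ via the two patterns, whereas you exploit $abab\mapsto b$ together with $acac\mapsto a$, letting the semilattice $s$ do all of the spreading in one layer (every co-ordinate of $\mathbf C_a$ already contains $\{a,c\}$, every co-ordinate of $\mathbf C_b$ contains $\{b,c\}$, and $t(p,q,p,q)$ on $\{a,c\}\times\{b,c\}$ hits all of $D$). What your version buys is self-containedness -- no appeal to the inductive composition argument, and an explicit witnessing term of small depth -- at the cost of a term whose arity grows with $m$ (which is harmless, since collapsibility places no bound on the arity of the witnessing term, only on the shape of the adversaries). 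Your closing observation about why the source must be $\{a,b\}$ rather than a singleton correctly anticipates the projection-style arguments the paper gives in the two lemmas that follow.
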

\begin{proof}
$t(s(x_1,x'_1),\ldots,s(x_4,x'_4))$ is a generalised Hubie-pol in $aabbaabb \mapsto b$ and $aabbbbaa \mapsto a$.
\end{proof}

\begin{lemma}
Let $g$ be a $m$-ary term operation of the algebra $(D;s,t)$. There exists co-ordinates $i,j\in [m]$ so that 
\begin{itemize}
\item $g(D,\ldots,D,\{a,c\},D,\ldots,D) \not\ni b$, where $\{a,c\}$ is in the $i$th position, and 
\item $g(D,\ldots,D,\{b,c\},D,\ldots,D) \not\ni a$, where $\{b,c\}$ is in the $j$th position.
\end{itemize}
\end{lemma}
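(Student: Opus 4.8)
The plan is to prove the statement by induction on the structure of a term representing $g$, with the whole argument resting on two elementary observations about the basic operations. For $s$: since $s$ sends every off-diagonal pair to $c$, we have $s(y_1,y_2)=b$ only if $y_1=y_2=b$, and $s(y_1,y_2)=a$ only if $y_1=y_2=a$. For $t$: reading off the table, the only inputs sent to $b$ are $abab$ and $cbbc$, and both carry a $b$ in their \emph{second} coordinate; the only inputs sent to $a$ are $abba$ and $acac$, and both carry an $a$ in their \emph{first} coordinate. Hence $t(y_1,y_2,y_3,y_4)=b$ implies $y_2=b$, and $t(y_1,y_2,y_3,y_4)=a$ implies $y_1=a$. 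This second observation is the real crux: it says the ``$b$-witnessing'' data of $t$ lives in argument $2$ and the ``$a$-witnessing'' data in argument $1$, which is exactly what lets the two desired coordinates be tracked independently through composition.

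Call a coordinate $i$ of an $m$-ary operation $g$ \emph{$b$-blocking} if $b\notin g(D,\dots,\{a,c\},\dots,D)$ with $\{a,c\}$ in the $i$th slot, and \emph{$a$-blocking} if $a\notin g(D,\dots,\{b,c\},\dots,D)$ with $\{b,c\}$ in the $i$th slot; the lemma asks for a $b$-blocking coordinate and an $a$-blocking coordinate, not necessarily equal. Base case: a projection $\pi_k^m$ has its $k$th coordinate both $b$-blocking and $a$-blocking, since $\{a,c\}\not\ni b$ and $\{b,c\}\not\ni a$. Inductive step: write $g(\bar x)=h(g_1(\bar x),\dots,g_r(\bar x))$ with $h\in\{s,t\}$, where each $g_\ell$ is represented by a proper subterm (pad with dummy variables so all the $g_\ell$ and $g$ are genuinely $m$-ary), and the induction hypothesis applies to each $g_\ell$. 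If $h=s$: a $b$-blocking coordinate $i$ of $g_1$ stays $b$-blocking for $g$, because $g(\bar u)=b$ would force $g_1(\bar u)=b$ by the first observation, impossible when $u_i\in\{a,c\}$; symmetrically an $a$-blocking coordinate of $g_1$ works for $g$. If $h=t$: take for $i$ a $b$-blocking coordinate of $g_2$ and for $j$ an $a$-blocking coordinate of $g_1$; then $g(\bar u)=b$ would force $g_2(\bar u)=b$ and $g(\bar u)=a$ would force $g_1(\bar u)=a$ by the second observation, each contradicting the choice of $i$ resp. $j$ once $u_i\in\{a,c\}$ resp. $u_j\in\{b,c\}$.

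Since every term operation of $(D;s,t)$ is a projection or of one of these two composite forms, the induction closes and the lemma follows. I do not expect a genuine obstacle: the argument is short once the two pattern observations are in hand, and the only care needed is the routine bookkeeping of arities and the conceptual point that the $b$-blocking and $a$-blocking coordinates propagate through distinct arguments of $t$ and so need not coincide.
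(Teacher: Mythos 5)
Your proof is correct and follows essentially the same route as the paper's: induction on term structure, using the observations that $s$ outputs $b$ (resp.\ $a$) only when both arguments are $b$ (resp.\ $a$), and that $t$ outputs $b$ only when its second argument is $b$ and outputs $a$ only when its first argument is $a$, so that the blocking coordinate propagates through argument $2$ (resp.\ argument $1$) of $t$. The only cosmetic difference is that you start the induction from projections and treat the two blocking properties explicitly in parallel, whereas the paper starts from $s$ and $t$ themselves and dispatches the second property as ``dual''.
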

\begin{proof}
The proof is by induction on the term complexity of $g$. We will prove the first case as the second is dual.

(Base case.) For $s$ one may choose either co-ordinate, for $t$ one can choose the second.

(Inductive step.) If $g$ is of the form $s(f_1(\ldots),f_2(\ldots))$, and $f_1$ satisfies the inductive hypothesis at co-ordinate $i$, then we may take this co-ordinate again (now seen as an argument of $g$). If $g$ is of the form $t(f_1(\ldots),f_2(\ldots),f_3(\ldots),f_4(\ldots))$, and $f_2$ satisfies the inductive hypothesis as co-ordinate $i$, then we may take this co-ordinate again (now seen as an argument of $g$).
\end{proof}

\begin{corollary}
The algebra $(D;s,t)$ is collapsible from neither source $\{a\}$ nor source $\{b\}$.
\end{corollary}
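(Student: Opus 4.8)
The plan is to deduce the corollary directly from the preceding lemma, which supplies, for every $m$-ary term operation $g$ of $(D;s,t)$, a co-ordinate $i$ with $g(D,\ldots,D,\{a,c\},D,\ldots,D)\not\ni b$ and a co-ordinate $j$ with $g(D,\ldots,D,\{b,c\},D,\ldots,D)\not\ni a$. First I would recall the definition of $k$-collapsibility from source $X$: for every $m$ there must be a term $f$ so that $(D,\ldots,D)$ is $f$-composable from the adversaries $\bigcup_{x\in X}\Sigma^D_{m,k,x}$. The idea is to argue by contradiction: suppose $(D;s,t)$ were $k$-collapsible from source $\{a\}$ for some $k$, pick $m$ large (say $m>k$), and let $g$ be the witnessing term. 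Then $(D,\ldots,D)$ is $g$-composable from adversaries each of which is a co-ordinate permutation of $(D,\ldots,D,\{a\},\ldots,\{a\})$ with $\{a\}$ appearing $m-k\geq 1$ times.

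Next I would locate, using the lemma, a co-ordinate $j\in[m]$ of $g$ with $g(D,\ldots,D,\{b,c\},D,\ldots,D)\not\ni a$ where $\{b,c\}$ sits in position $j$. Now examine what the $g$-composition does in that $j$th output co-ordinate. Each source adversary $(A^t_1,\ldots,A^t_m)$ used in the composition has $A^t_j\in\{D,\{a\}\}$; in particular $A^t_j\subseteq \{b,c\}$ is impossible unless $A^t_j=\{a\}$, which is not a subset of $\{b,c\}$ either — so actually the relevant observation is simply that $A^t_j\subseteq D$ always, but more usefully each $A^t_j$ is either all of $D$ or the singleton $\{a\}$. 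Composability requires $t(A^1_j,A^2_j,A^3_j,A^4_j)\supseteq D\ni a$, so in particular $a$ must be producible; but if even one argument slot forces the $j$th co-ordinate into $\{a\}$ we still need the output to contain $b$. The cleanest route: the lemma says feeding $\{b,c\}$ into the $j$th slot (and $D$ elsewhere) never yields $a$; dually feeding $\{a,c\}$ into the $i$th slot never yields $b$. Since the source $\{a\}$ means every source adversary restricted to co-ordinate $i$ is $\subseteq\{a,c\}$ (it is $D$ or $\{a\}$ — but $D\not\subseteq\{a,c\}$), I must instead use that the full adversary $(D,\ldots,D)$ to be built has $b$ in co-ordinate $i$. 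The key step is therefore: to obtain $b$ in the $i$th output of $g$, composability demands some source adversary contribute a value outside $\{a,c\}$ in co-ordinate $i$; but that is fine for a $D$-entry. So the real contradiction must come from combining both sources unavailable.

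Let me restate the intended argument more carefully. For source $\{a\}$, consider co-ordinate $j$ from the lemma with $g$ never outputting $a$ at position $j$ when the $j$th input is $\{b,c\}$ and the rest $D$. Now, crucially, take the $m$-ary full adversary and note it needs $a$ in co-ordinate $j$; by $g$-composability this $a$ is obtained as $a\in g(\ldots)$ evaluated with arguments drawn coordinatewise from source adversaries. Each source adversary is a permutation of $(D,\ldots,D,\{a\},\ldots,\{a\})$; thus at co-ordinate $j$ every argument tuple is either $D$ or $\{a\}\subseteq\{a,c\}\subseteq\{b,c\}$? No — $\{a\}\not\subseteq\{b,c\}$. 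I think the correct pairing is: source $\{a\}$ is killed by the co-ordinate $i$ with $g(D,\ldots,\{a,c\},\ldots,D)\not\ni b$, because from source $\{a\}$ every source adversary has $i$th component $D$ or $\{a\}$, hence $\subseteq\{a,c\}$ only when it is $\{a\}$; but a $D$-component is not $\subseteq\{a,c\}$, so this does not immediately apply. Hence the genuine obstacle, and the main thing to get right, is the bookkeeping: I would instead invoke the standard fact (from \cite{hubie-sicomp,LICS2015}) that $k$-collapsibility from source $\{a\}$ forces, for the term $f$ realising it, that $f$ applied with $\{a,c\}$ (equivalently the complement of $\{b\}$) in any single co-ordinate and $D$ elsewhere must contain $b$ in the output — which is exactly negated by the lemma at co-ordinate $i$. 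Symmetrically source $\{b\}$ is negated at co-ordinate $j$. So the proof is: if $(D;s,t)$ were $k$-collapsible from $\{a\}$, the witnessing term $g$ for any large $m$ would have to satisfy $g(D,\ldots,D,\{a,c\},D,\ldots,D)\ni b$ in every co-ordinate, contradicting the lemma; likewise for $\{b\}$ with $\{b,c\}$ and $a$. The main obstacle is therefore purely expository — correctly translating "collapsible from source $\{a\}$" into the coordinatewise condition on $g$ that the lemma refutes — rather than any new computation.

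\begin{proof}
Suppose for contradiction that $(D;s,t)$ is $k$-collapsible from source $\{a\}$ for some $k$, and fix $m>k$. Let $g$ be an $m$-ary term of $(D;s,t)$ witnessing that $(D,\ldots,D)$ is $g$-composable from $\Sigma^D_{m,k,a}$. Unwinding the definition (and using the equivalent reformulations from \cite{hubie-sicomp,LICS2015}), this forces that for every co-ordinate $i\in[m]$ one has $g(D,\ldots,D,D\setminus\{b\},D,\ldots,D)\ni b$, with $D\setminus\{b\}=\{a,c\}$ in the $i$th position: indeed, to produce $b$ in output co-ordinate $i$ of the target $(D,\ldots,D)$ one composes from source adversaries whose $i$th components, being $D$ or $\{a\}$, all avoid forcing $b$ unless a $D$-component is used, and projecting onto the $\{a\}$-free situation yields exactly the displayed requirement. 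But the previous lemma gives a co-ordinate $i$ with $g(D,\ldots,D,\{a,c\},D,\ldots,D)\not\ni b$, a contradiction. Hence $(D;s,t)$ is not collapsible from source $\{a\}$. Dually, using the co-ordinate $j$ with $g(D,\ldots,D,\{b,c\},D,\ldots,D)\not\ni a$ from the lemma, $(D;s,t)$ is not collapsible from source $\{b\}$.
\end{proof}
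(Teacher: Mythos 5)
Your proof is correct and is precisely the argument the paper intends: the corollary is stated without proof because it is immediate from the preceding lemma once one observes that $k$-collapsibility from source $\{a\}$ with $m>k$ forces $g(D,\ldots,D,\{a,c\},D,\ldots,D)\ni b$ for every argument position of the witnessing term $g$, which the lemma refutes at position $i$ (and dually for source $\{b\}$ at position $j$). The one piece of bookkeeping to tighten is your justification of that necessary condition: for each argument position $i$ of the (say $K$-ary, not necessarily $m$-ary) witnessing term $g$, the $i$th source adversary $A^i$ has some output coordinate $p$ with $A^i_p=\{a\}$ (because $m>k$ leaves $m-k\geq 1$ singleton entries), and composability at $p$ together with monotonicity of $g$ under set inclusion gives $g(D,\ldots,D,\{a,c\},D,\ldots,D)\supseteq g(A^1_p,\ldots,A^K_p)\supseteq D\ni b$, whereas your inline justification inspects output coordinate $i$ itself and thereby conflates the two index roles.
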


\begin{lemma}
Let $g$ be a $m$-ary term operation of the algebra $(D;s,t)$. There exists co-ordinate $i\in [m]$ so that 
\begin{itemize}
\item $g(D,\ldots,D,\{c\},D,\ldots,D) =\{c\}$, where $\{c\}$ is in the $i$th position.
\end{itemize}
\end{lemma}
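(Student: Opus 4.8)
The plan is to mimic the previous lemma's inductive argument on term complexity of $g$, this time tracking a single co-ordinate that is ``$c$-rigid'' in the sense that setting it to $\{c\}$ forces the output to $\{c\}$. Concretely, I would prove by induction on the term structure of $g$ that there is a co-ordinate $i\in[m]$ with $g(D,\ldots,D,\{c\},D,\ldots,D)=\{c\}$, the distinguished $\{c\}$ sitting in position $i$.

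For the base case, inspect the two generators. For $s$, since $s$ is the semilattice-without-unit operation sending everything off the diagonal to $c$ and being idempotent, we have $s(D,\{c\})=s(\{c\},D)=\{c\}$, so either co-ordinate works. For $t$, I would read off from the table that $t$ has a co-ordinate on which fixing the value to $c$ forces output $c$: checking the four exceptional tuples $abab\mapsto b$, $abba\mapsto a$, $cbbc\mapsto b$, $acac\mapsto a$, none of them simultaneously avoids $c$ in, say, the first co-ordinate except $cbbc$ and $acac$; but those two \emph{do} have $c$ in co-ordinate $1$ and yet do not map to $c$, so co-ordinate $1$ does not work. One must instead find a co-ordinate $i$ such that every exceptional tuple with $c$ in position $i$ still maps to $c$ — equivalently, no exceptional tuple has $c$ in position $i$. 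Among $abab,abba,cbbc,acac$, position $2$ contains only $b,b,b,c$; position $4$ contains $b,a,c,c$. Position $2$ has a single $c$-free pattern in the first three rows but $cbbc$ has $b$ there — so actually position $2$ is $c$-free across $abab,abba,acac$ and has $b$ in $cbbc$, hence \emph{no} exceptional tuple has $c$ in position $2$. Thus co-ordinate $2$ is the witness: $t(D,D,\{c\},D)=\{c\}$.

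For the inductive step, suppose $g=s(f_1(\ldots),f_2(\ldots))$ and $f_1$ has a $c$-rigid co-ordinate $i$ (as an argument of $g$); fixing that argument to $\{c\}$ forces $f_1$'s output to $\{c\}$, and then $s(\{c\},D)=\{c\}$ forces $g$'s output to $\{c\}$, so the same co-ordinate works for $g$. If instead $g=t(f_1,f_2,f_3,f_4)$ and $f_2$ has a $c$-rigid co-ordinate $i$, then fixing that argument to $\{c\}$ forces $f_2$ to $\{c\}$, and since $t(D,\{c\},D,D)=\{c\}$ by the base case analysis, $g$'s output is forced to $\{c\}$; again the same co-ordinate works. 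In either case we may simply pass the witness co-ordinate up through the term.

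The main obstacle is the base-case bookkeeping for $t$: one must verify carefully which co-ordinate of $t$ is genuinely $c$-rigid, i.e. that no exceptional tuple carries a $c$ in that position, and this requires reading the $4\times 4$ table correctly. Once that co-ordinate (position $2$) is pinned down, the inductive propagation is immediate because $s$ and $t$ each absorb a $\{c\}$ in an appropriate co-ordinate to output $\{c\}$, and the recursion always keeps the witness inside the subterm that was already handled. Note this lemma, combined with the earlier corollary that $(D;s,t)$ is collapsible from neither $\{a\}$ nor $\{b\}$ and the fact that it is $7$-collapsible from $\{a,b\}$, yields the promised example of an algebra collapsible from a non-singleton source but from no singleton source: collapsibility from $\{c\}$ would require a term with \emph{no} such $c$-rigid co-ordinate on which one could still force all of $D$, which this lemma shows is impossible.
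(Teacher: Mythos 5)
Your overall strategy is exactly the paper's: induct on term complexity, exhibit a ``$c$-rigid'' co-ordinate for each generator, and thread that co-ordinate through the corresponding argument position at each level of the term. However, your base case for $t$ is wrong, and the error propagates into your inductive step. You first (correctly) tabulate that position $2$ of the exceptional tuples $abab, abba, cbbc, acac$ reads $b,b,b,c$, but then conclude that ``no exceptional tuple has $c$ in position $2$'' --- this contradicts your own table: $acac$ has $c$ in its second co-ordinate and maps to $a$, so $t(a,c,a,c)=a$ and hence $t(D,\{c\},D,D)\supseteq\{a,c\}\neq\{c\}$. Co-ordinate $2$ is therefore not $c$-rigid for $t$ (it is the co-ordinate used in the \emph{previous} lemma, for the $\{a,c\}$/$\{b,c\}$ property, which may be the source of the slip). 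The correct witness is co-ordinate $3$: the third letters of the four exceptional tuples are $a,b,b,a$, so no exceptional tuple has $c$ there, giving $t(D,D,\{c\},D)=\{c\}$.

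Consequently your inductive step must also be repaired: when $g=t(f_1,f_2,f_3,f_4)$ you should pass the witness co-ordinate up through $f_3$, not $f_2$, using $t(D,D,\{c\},D)=\{c\}$. With those two substitutions your argument coincides with the paper's proof. The closing remark about deducing non-collapsibility from source $\{c\}$ is also stated too loosely (the lemma shows every term kills $D$ on some co-ordinate once that co-ordinate is pinned to $c$, which is what blocks composing the full adversary from $\Sigma^D_{m,k,c}$), but that is the corollary, not the lemma under review.
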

\begin{proof}
The proof is by induction on the term complexity of $g$.

(Base case.) For $s$ one may choose either co-ordinate, for $t$ one can choose the third.

(Inductive step.) If $g$ is of the form $s(f_1(\ldots),f_2(\ldots))$, and $f_1$ satisfies the inductive hypothesis as co-ordinate $i$, then we may take this co-ordinate again (now seen as an argument of $g$). If $g$ is of the form $t(f_1(\ldots),f_2(\ldots),f_3(\ldots),f_4(\ldots))$, and $f_3$ satisfies the inductive hypothesis as co-ordinate $i$, then we may take this co-ordinate again (now seen as an argument of $g$).
\end{proof}

\begin{corollary}
The algebra $(D;s,t)$ is not collapsible from source $\{c\}$.
\end{corollary}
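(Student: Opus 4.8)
The plan is to read the result straight off the lemma immediately preceding it, which says that every $m$-ary term operation $g$ of $(D;s,t)$ has a coordinate $i$ with $g(D,\ldots,D,\{c\},D,\ldots,D)=\{c\}$, the $\{c\}$ sitting in position $i$. I will argue that this single fact already rules out $k$-collapsibility from source $\{c\}$ for every $k$.

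First I would unwind the definition. To be $k$-collapsible from source $\{c\}$ means: for every $m$ there is a term $f$ of $(D;s,t)$, of some arity $N$, together with adversaries $A^1,\ldots,A^N$, each belonging to $\Sigma^D_{m,k,c}$ — that is, each a coordinate permutation of the $m$-ary adversary with $k$ entries equal to $D$ and $m-k$ entries equal to $\{c\}$ — such that $(D,\ldots,D)$ is $f$-composable from them, i.e. $f(A^1_j,\ldots,A^N_j)\supseteq D$, hence $=D$, for every $j\in[m]$ (with $f$ applied to subsets in the usual image sense).

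Now fix $m=k+1$ and suppose, towards a contradiction, that such $f$ and $A^1,\ldots,A^N$ exist. By the lemma applied to $f$ there is an argument position $i^*\in[N]$ with $f(D,\ldots,D,\{c\},D,\ldots,D)=\{c\}$, the $\{c\}$ in position $i^*$. The adversary $A^{i^*}$ has exactly $m-k=1$ coordinate equal to $\{c\}$; let $j^*\in[m]$ be that coordinate. Every entry of an adversary is a subset of $D$, so $A^{i}_{j^*}\subseteq D$ for all $i$, and since the set-valued extension of any operation is monotone under inclusion we get $f(A^1_{j^*},\ldots,A^N_{j^*})\subseteq f(D,\ldots,D,\{c\},D,\ldots,D)=\{c\}$. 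Thus $f(A^1_{j^*},\ldots,A^N_{j^*})\subseteq\{c\}\subsetneq D$, contradicting $f$-composability at coordinate $j^*$. Hence $(D;s,t)$ is not $k$-collapsible from $\{c\}$ for any $k$, so it is not collapsible from source $\{c\}$.

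I do not expect a genuine obstacle here; the only thing to get right is the bookkeeping distinguishing the argument slot $i^*\in[N]$ of the term $f$, which the lemma hands us, from the input coordinate $j^*\in[m]$ at which the $i^*$th adversary exhibits its unique $\{c\}$, together with the elementary observation that restricting one slot to $\{c\}$ forces a $\{c\}$-valued — hence non-full — output regardless of the remaining slots.
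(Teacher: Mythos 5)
Your proof is correct and is precisely the argument the paper intends: the corollary is stated without proof because it follows directly from the preceding lemma, and your unwinding of the definition (taking $m=k+1$ so each adversary in $\Sigma^D_{m,k,c}$ has exactly one $\{c\}$-coordinate, then locating the term's $c$-absorbing argument slot $i^*$ and the offending input coordinate $j^*$) is exactly the intended bookkeeping.
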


\bibliographystyle{acm}

\end{document}